\documentclass[sigconf]{acmart}

\usepackage{graphicx}
\usepackage{balance}
\usepackage{amsmath}
\usepackage{amssymb}
\usepackage{amsfonts}
\usepackage{algorithm}
\usepackage[noend]{algpseudocode}
\usepackage{textcomp}
\usepackage{times}
\usepackage{tikz}
\usepackage{paralist}
\usepackage{caption}
\usepackage{booktabs}
\usepackage{multirow}
\usepackage{subcaption}
\usepackage{url}
\usepackage{xspace}
\usepackage{todonotes}
\usepackage{relsize}
\usepackage{xcolor}
\usepackage[normalem]{ulem}
\usepackage{array}
\usepackage{enumitem}

\setlist{nosep,leftmargin=*}

\usetikzlibrary{shapes.misc}
\usetikzlibrary{shapes.geometric}
\usetikzlibrary{shapes,arrows}
\usetikzlibrary{positioning}

\urlstyle{rm}

\acmConference[CIKM'20]{CIKM}{2020}{}
\settopmatter{printacmref=false}

\newcommand{\name}{\textsc{HUKA}\xspace}

\renewcommand{\comment}[1]{}

\setlength{\abovecaptionskip}{0pt}
\setlength{\belowcaptionskip}{0pt}

\newtheorem{definition}{Definition}
\newtheorem{lemma}{Lemma}

\definecolor{applegreen}{rgb}{0.55, 0.71, 0.0}
\definecolor{alizarin}{rgb}{0.82, 0.1, 0.26}
\definecolor{ao(english)}{rgb}{0.0, 0.5, 0.0}

\tikzset{
itria/.style={
  draw,shape border uses incircle,
  regular polygon,
        regular polygon sides=3,shape border rotate=0,yshift=-0.5cm, scale=0.38},
}
\tikzset{fontscale/.style = {font=\relsize{#1}}}

\newcommand{\dbpedia}{DBpedia\xspace}
\newcommand{\yago}{YAGO2\xspace}
\newcommand{\gprom}{\textsf{GProM}\xspace}
\newcommand{\provsql}{\textsf{ProvSQL}\xspace}
\newcommand{\neo}{\textsf{Neo4j}\xspace}
\newcommand{\tripleprov}{\textsf{TripleProv}\xspace}

\newcommand{\potm}{\ensuremath{\mathcal{PM}}\xspace}
\newcommand{\pmo}{\ensuremath{\potm_{1:1}}\xspace}
\newcommand{\pmm}{\ensuremath{\potm_{1:m}}\xspace}
\newcommand{\mmp}{MultiMap\xspace}
\newcommand{\regq}{\ensuremath{{Q}_\text{reg}}\xspace}
\newcommand{\mmq}{\ensuremath{{Q}_\text{mm}}\xspace}

\newcommand{\lit}{\textit}
\newcommand{\labl}{\textit}

\usepackage{listings}
\usepackage{graphics}
\lstdefinestyle{RDF}{basicstyle=\ttfamily,
                        keywordstyle=\lstuppercase,
                        emphstyle=\itshape,
                        showstringspaces=false,
                        }
\lstdefinelanguage[RDF]{SPARQL}[]{SPARQL}{
  morekeywords={SELECT, WHERE, FILTER, PREFIX},
}
\lstset{language=[RDF]SPARQL, style=RDF, }

\definecolor{auburn}{rgb}{0.43, 0.21, 0.1}
\lstdefinestyle{customsparql}{
  belowcaptionskip=1\baselineskip,
  breaklines=true,
  xleftmargin=\parindent,
  language=SPARQL,
  showstringspaces=false,
  basicstyle=\footnotesize\ttfamily\color{black},
  keywordstyle=\bfseries\color{black},
  commentstyle=\footnotesize\ttfamily\color{black},
  identifierstyle=\color{black},
  stringstyle=\color{auburn},
}
\lstset{escapechar=@,style=customsparql}

\begin{document}

\title{How and Why is An Answer (Still) Correct? Maintaining Provenance in Dynamic Knowledge Graphs}

\author{Garima Gaur}
\affiliation{%
\institution{Computer Science and Engineering,}
\institution{Indian Institute of Technology, Kanpur}
\country{India}
}
\email{garimag@cse.iitk.ac.in}

\author{Arnab Bhattacharya}
\affiliation{%
\institution{Computer Science and Engineering,}
\institution{Indian Institute of Technology, Kanpur}
\country{India}
}
\email{arnabb@cse.iitk.ac.in}

\author{Srikanta Bedathur}
\affiliation{%
\institution{Computer Science and Engineering,}
\institution{Indian Institute of Technology, Delhi}
\country{India}
}
\email{srikanta@cse.iitd.ac.in}

\begin{abstract}
	Knowledge graphs (KGs) have increasingly become the backbone of many
	critical knowledge-centric applications. Most large-scale KGs used in
	practice are automatically constructed based on an ensemble of extraction
	techniques applied over diverse data sources. Therefore, it is important to
	establish the provenance of results for a query to determine how these were
	computed. Provenance is shown to be useful for assigning confidence scores
	to the results, for debugging the KG generation itself,  and for providing
	answer explanations.  In many such applications, certain queries are
	registered as standing queries since their answers are needed often.
	However, KGs keep continuously changing due to reasons such as changes in
	the source data, improvements to the extraction techniques,
	refinement/enrichment of information, and so on. This brings us to the
	issue of efficiently maintaining the provenance polynomials of complex
	graph pattern queries for dynamic and large KGs instead of having to
	recompute them from scratch each time the KG is updated.  Addressing these
	issues, we present \name which uses provenance polynomials for tracking the
	derivation of query results over knowledge graphs by encoding the edges
	involved in generating the answer. More importantly, \name also maintains
	these provenance polynomials in the face of updates---insertions as well as
	deletions of facts---to the underlying KG.  Experimental results over large
	real-world KGs such as YAGO and DBpedia with various benchmark SPARQL query
	workloads reveals that \name can be almost $50$ times faster than existing
	systems for provenance computation on dynamic KGs.
\end{abstract}

\maketitle

\section{Motivation}
\label{sec:motivation}

Use of large-scale knowledge graphs (KGs), which model the inter-relationships
between various entities that occur in real-life, has become common in many
knowledge-centric applications. Apart from the critical role they play in
web-search systems (e.g., Google Knowledge Graph, Microsoft Bing Satori, etc.)
they are used in applications such as e-governance~\cite{hendler2012us},
technical support~\cite{techsupp1}, drug administration~\cite{drugbank},
{scholarly search~\cite{teknobase}}, and many more. While some of the knowledge
graphs are carefully hand-crafted, most KGs are automatically constructed using
one or more information extraction pipelines over a variety of underlying data
sources.  As a result, the KG could contain facts obtained through different
mechanism and, thus, a query result could be generated through a combination of
facts with vastly varying \emph{provenance}. For instance, it is possible that
a fact derived from research studies is joined with a fact from web-based
user-generated sources~\cite{pharmacovigilence14}. Tracking fine-grained
provenance of individual facts in a knowledge graph alone (as done
in~\cite{w3prov}) is not sufficient, since we also need to track the provenance
of individual query results as well, i.e., which specific facts were involved
in generating an individual result. 

In this paper, we use \emph{provenance polynomials}~\cite{provenancesemirings}
to track the provenance of answers of a graph pattern query over knowledge
graphs stored in \emph{property-graph engines} such as Neo4j. While provenance
polynomials have been successfully used by relational
databases~\cite{provenancesemirings,Cheney2009ProvenanceID,senellart2018provenance,senellart2018provsql,arab2018gprom}
and RDF triple stores~\cite{tripleProv} in the past, they  are not available as
over a property graph store such as Neo4j. 

In addition to providing efficient derivation of provenance information for
query answers, we focus on making the query answers as well as their provenance
derivations \emph{easily updated} as the underlying KG is updated. The
``knowledge'' contained in most real-world KGs continually \emph{evolves over
time}. Such evolution could be due to addition of new facts obtained from novel
data sources, via \emph{knowledge-base completion} (KBC)
techniques~\cite{wang2017knowledge}, continuous fact refinement
approaches~\cite{nell}, etc. Similarly, facts could be culled from the KG when
evidence that they are incorrect/invalid is obtained. For example, consider a
(fragment of) medical knowledge graph extracted from recent medical news
regarding the novel SaRS-CoV-2 virus, one may observe that new symptoms such as
\textit{``SaRS-CoV-2 causes loss of smell''} and \textit{``SaRS-CoV-2 causes
skin rashes''} which need to be included in the medical KG. Addition of such
new symptoms can result in (re-)classifying many patients, whose symptoms were
considered benign earlier, as potential COVID-19 patients now. Note that it is
also possible to de-classify some symptoms, resulting in the deletion of the
corresponding fact(s) from the KG. 

As a result, queries such as ``return all COVID-19 positive cases and their
contacts in a hospital database'' that are very important and are registered as
\emph{standing queries}, may generate \emph{significantly different} answers
with potentially significant effect on the real-world decisions based on the
query answers.  Therefore, an important technical challenge that needs to be
addressed is to determine whether answers to a query is \emph{still} valid or
up-to-date in the face of changes to the KG on which it is evaluated, and if
so, demonstrate \emph{how so} by presenting its evidence. A typical evidence
one seeks is the provenance of how a query result was generated by identifying
database entries which were responsible for computing each result instance.
These are the \emph{why} and \emph{how} provenance of queries.  Provenance is
also important in demonstrating why the result of a query is no longer valid if
it is not so.

A simplistic way to check if a query result is still valid or not is to
re-execute all the registered queries after each update to the knowledge graph,
and generate an alert if a result of any query differs from the previous one.
Clearly, this is impractical due to potentially large number of registered
queries, the large scale of knowledge graphs, the need to materialize
potentially large volume of query results for comparison and the high frequency
of updates to the KG. 

One may also wonder if any incremental graph pattern matching algorithms such
as TurboFlux~\cite{turboflux}, IncIsoMat~\cite{isoIncMat},
SJ-Tree~\cite{sjTree}, Graphflow~\cite{graphflow}, etc. can be used to maintain
the basic graph pattern queries. However, none of these algorithms support
computing and maintaining the provenance (either why or how) of queries.
Further, the presence of projections in the queries we consider (see
Sec.~\ref{sec:background}) make simple pattern finding harder~\cite{wikimaze}.
Thus, incremental graph pattern matching algorithms are not applicable in the
settings we consider in this paper.

\subsection{Contributions}

Our contributions in this paper are:

\begin{itemize}

	\item We present an algorithmic framework called \textbf{\name}
		(\emph{maintaining How provenance under Updates to Knowledge grAphs})
		that achieves the above two objectives by incrementally maintaining the
		how-provenance of query results in the face of updates to the
		underlying knowledge graph (Sec.~\ref{sec:framework}).  Our solution is
		based on an adaptation of provenance
		polynomials~\cite{provenancesemirings} for KGs queried using SPARQL. It
		addresses both the why- and how-provenance of conjunctive queries under
		dynamic KGs where both insertion and deletion of facts can take place
		(updates to a fact are modeled using a deletion and a subsequent
		insertion).

	\item \name compactly maintains, using provenance polynomials, the results
		of all standing queries registered for maintenance, and the facts that
		are relevant for the query (Sec.~\ref{sec:registration}).  It also
		maintains information about the subqueries obtained by removing one
		edge at a time from the basic graph pattern of the query. In order to
		exploit the shared plans across all queries/subqueries in the workload,
		it merges individual subquery execution plans generated using AND-OR
		graphs~\cite{andortree} to form a single global execution tree. When
		the KG is updated, the updated query results is computed using a
		filter-and-refine paradigm that helps to quickly recompute the subquery
		results as well (Sec.~\ref{sec:update}).
 
	\item We use two large real-world knowledge graphs for empirical
		evaluation, namely, \yago~\cite{yago2} and \dbpedia~\cite{dbpedia}.
		Our results show that we can update the answer sets along with their
		provenance polynomials in about $1$ second, and can be faster than the
		baselines by almost $50$ times (Sec.~\ref{sec:expts}).

\end{itemize}

\section{Related Work}
\label{sec:related}

The need of query result \emph{provenance} to better understand the role of
underlying data in generating a materialized \emph{view}, called \emph{lineage}
earlier, was felt and fulfilled \cite{lineage} in data warehouse environments.
Query provenance is utilized for various tasks like schema debugging
\cite{debug}, reverse engineering queries \cite{reverse}, trust assessment
\cite{trust}, query computation in probabilistic databases \cite{trio-prob},
etc.

The first attempt to formalize the concept of provenance is by
\cite{whyandwhere}. They proposed a model to capture the \emph{why} and
\emph{where} provenance.  Later more models capturing different aspects of
query provenance, including \emph{when} \cite{whenprovenance}, \emph{how}
\cite{provenancesemirings}, and \emph{why not} \cite{whynotprovenance}, were
proposed. These variations provide different dimensions to understand the query
results.  The seminal work by Green et al. \cite{provenancesemirings}
introduced the \emph{how} provenance and modeled it using \emph{provenance
semirings}. They used the annotated relations to encode the provenance and
presented a framework to propagate these annotations along with the query
computation. This model proposed a symbolical representation of the derivation
process of query answers. Each derivation is represented by a polynomial with
each monomial representing a single derivation.  The \emph{how} provenance
constitutes of, along with the derivation process, the data items involved in
the derivation process. Since how-provenance subsumes why-provenance, we have
used the how-provenance model to capture the provenance. 

\subsubsection*{Query Provenance in Systems}

In practice, these models have been adopted quite well to compute the
provenance along with query evaluation. Trio \cite{trio} is the one of the
early systems to support lineage computation. The efforts to support query
processing along with provenance computation resulted in building different
systems.  These include Orchestra \cite{orchestra} that supports provenance
computation in a collaborative environment, \tripleprov~\cite{tripleProv} that
supports how provenance for linked data, \gprom~\cite{arab2018gprom}---a
successor of Perm \cite{perm}---that supports why, where and how provenance,
and, \provsql~\cite{senellart2018provsql} that supports the same.  These
systems, however, enable computing, storing and querying provenance for a
static dataset. Recently,~\cite{dynamicProv} proposed a theoretical model to
maintain the data provenance in a RDF versioning system.  Avgoustaki et
al.~\cite{sparqlUpdateProv} address the problem of computing provenance of data
generated by SPARQL INSERT operations. We, on the other hand, compute and
maintain the provenance of pre-computed query results of SPARQL conjunctive
queries.  The work that can be considered as our counterpart is by
\cite{gaur2017} on maintenance of provenance under fact deletions.  In this
paper, we focus on maintenance of provenance of query results when facts, i.e.,
edges are added.  We also present deletion handling for completeness.

\section{Background}
\label{sec:background}

\begin{figure*}
	\begin{minipage}{0.6\textwidth}
		\begin{subfigure}[t]{0.8\textwidth}
			\centering
			\begin{tikzpicture}[scale=0.6, transform shape, font={\fontsize{12pt}{12}\selectfont}]
				\node[minimum size= 2cm] (A1) at (-0.5,-2.8) {\shortstack{$\mathbf{\{coAuthor, out, q_{i}},$ \\ $\mathbf{Ooi,e_{15}.e_{16}\}}$}};
				\node[minimum size= 2cm] (A2) at (-6.5,-2.8) {\shortstack{$\mathbf{\{coAuthor, in, q_{j}},$ \\ $\mathbf{Ramakrishnan,e_{1}.e_{3}\}}$}};
				\tikzstyle{every node} = [rounded rectangle,draw]    		
				\node (P)[ line width = 0.5mm] at +(1,0.2) {PhD};
				\node (SS)[draw=alizarin, line width = 0.5mm] at + (-0.5,2) {Sarawagi};
				\node (C1)[draw=blue, line width = 0.5mm] at + (-7.5,2.5) {Sudarshan};
				\node (RR)[draw=alizarin, line width = 0.5mm] at +(-5,1.5) {Ramakrishnan};
				\node (UW)[draw=applegreen, line width = 0.5mm] at +(-4, -2.5) {U Wisc};
				\node (S1)[draw=blue, line width = 0.5mm] at +(-6, -2) {Gehrke};
				\node (S2)[draw=blue, line width = 0.5mm] at +(-4,4) {Godbole};
				\node (I1)[draw=applegreen, line width = 0.5mm] at +(-7,4.5) {IBM};
				\node (MS)[draw=alizarin, line width = 0.5mm] at +(5,1.5) {Stonebraker};
				\node (UCB)[draw=applegreen, line width = 0.5mm] at +(6.5,3) {MIT};
				\node (S3)[draw=blue, line width = 0.5mm] at +(1.5,4.5) {Carey};
				\node (S4)[draw=blue, line width = 0.5mm] at +(0, -2) {Ooi};
				\node (I2)[draw=applegreen, line width = 0.5mm] at +(5,-2.5) {NUS};
				\tikzstyle{every node} = [rounded rectangle]
				\node(R12) at +(-0.1,3.4) {coAuthor};
				\node(E12) at +(1.3,3.2) {$e_{12}$};
				\node(R10) at +(-1.6,3.7) {coAuthor};
				\node(E10) at +(-2.5,3.2) {$e_{10}$};
				\node(R9) at +(-4,3) {hadAdvisor};
				\node(E9) at +(-2.8,2.75) {$e_{9}$};
				\node(R1) at +(-7,0) {hadAdvisor};
				\node(E1) at +(-6,-0.5) {$e_{1}$};
				\node(R3) at +(-3.3,-1.5) {worksIn};
				\node(E3) at +(-4,-1) {$e_{3}$};
				\node(R4) at +(-7.2,1.75) {coAuthor};
				\node(E4) at +(-6,2.2) {$e_{4}$};
				\node(R5) at +(-2,0.2) {coAuthor};
				\node(R7) at +(1.7,0.8) {hasDegree};
				\node(R13) at +(3.8,3.7) {hadAdvisor};
				\node(R14) at +(3.5,-0.5) {hadAdvisor};
				\node(R16) at +(-0.275,-0.5) {hasDegree};
				\node(E16) at +(0,-1) {$e_{16}$};
				\node(R17) at +(6.8,2) {worksIn};
				\node(E17) at +(5.3,2.3) {$e_{17}$};
				\draw[->, line width = 1]  (S1) -- (RR);
				\draw[->, line width = 1]  (RR) -- (P) node[midway,above,fill=none] {hasDegree};
				\draw[->, line width = 1]  (RR) -- (UW);
				\draw[<->, line width = 1]  (RR) -- (C1);
				\draw[<->, line width = 1]  (RR) -- (S4);
				\draw[<->, line width = 1]  (SS) -- (RR) node[midway,below,pos=0.5,fill=none] {coAuthor};
				\draw[->, line width = 1]  (SS) -- (P);
				\draw[->, line width = 1]  (SS) -- (MS) node[midway,below,fill=none] {hadAdvisor};
				\draw[<->, line width = 1]  (SS) .. controls(-1.5,3).. (S2);
				\draw[->, line width = 1]  (S2) -- (I1) node[midway,below,fill=none] {worksIn};
				\draw[<->, line width = 1]  (SS) .. controls(0.5,3) .. (S3);
				\draw[->, line width = 1]  (S2) ..controls(-3,3).. (SS); 
				\draw[->, line width = 1]  (S3) -- (MS);
				\draw[->, line width = 1]  (S4) -- (MS);
				\draw[->, line width = 1]  (S4) -- (I2) node[midway,below,fill=none] {worksIn};
				\draw[->, line width = 1]  (S4) -- (P);
				\draw[->, line width = 1]  (MS) -- (UCB);
				\tikzstyle{every node} = [rounded rectangle]
				\draw[->, line width = 1]  (RR) -- (P) node[midway,below,fill=none] {$e_{2}$};
				\draw[->, line width = 1]  (RR) -- (S4) node[midway,below,fill=none] {$e_{5}$};		
				\draw[->, line width = 1]  (SS) -- (RR) node[midway,above,pos=0.5,fill=none] {$e_{6}$};
				\draw[->, line width = 1]  (SS) -- (P) node[midway,above,fill=none] {$e_{7}$};
				\draw[->, line width = 1]  (SS) -- (MS) node[midway,above,fill=none] {$e_{8}$};
				\draw[->, line width = 1]  (SS) .. controls(-1.5,3).. (S2);
				\draw[->, line width = 1]  (S2) -- (I1) node[midway,above,fill=none] {$e_{11}$};
				\draw[->, line width = 1]  (S3) -- (MS) node[midway,above,fill=none] {$e_{13}$};
				\draw[->, line width = 1]  (S4) -- (MS) node[midway,above,fill=none] {$e_{14}$};
				\draw[->, line width = 1]  (S4) -- (I2) node[midway,above,fill=none] {$e_{15}$};
				\draw[->, line width = 1]  (MS) -- (UCB);
				\draw [red,solid,fill=red!40,opacity=0.2, line width = 0.5mm] plot [smooth cycle] coordinates {(-4.5,-3) (-6, 1.5) (-0.5,2.5) (4.75,2.1) (6,3.5)  (7.5,3.5) (5,1) (-0.5,1.5) (-2.5,1.5) (1.25,0.6) (1.25,-0.32) (-4.25,0.6) (-3,-3) };
				\draw [blue,solid,fill=blue!40,opacity=0.2, line width = 0.5mm] plot [smooth cycle] coordinates {(-0.2,-1.1) (4.75,2.1) (6,3.5)  (7.5,3.5) (5.2,1)  (-0.5,-2.5) (-4.5,0.6) (-3,-3) (-4.5,-3) (-6, 1.5) (-4.5,2) (1.25,0.7) (1.25, -0.3) (-2.75,0.6)};
			\end{tikzpicture}    
			\caption{Knowledge Graph}
			\label{fig:kg}
		\end{subfigure}
	\end{minipage}
	\begin{minipage}{0.3\textwidth}
		\begin{subfigure}[t]{1\textwidth}
		\centering
			\begin{tikzpicture}[scale=0.7, transform shape, line width = 1, font={\fontsize{12pt}{12}\selectfont}]
				\tikzstyle{every node} = [circle, draw=black]
				\node(a) at (0,3) {?org1};
				\node (b) at (2,1.5) {?collab};
				\node (c) at (4,3.5) {?stud};
				\node (d) at (5.5,0.5) {?org2};
				\node (e) at (0, 0.5) {PhD};
				\node (f) at (6,2.5) {?prof};
				\tikzstyle{every node} = [circle]
				\draw[->]  (b) -- (a);
				\node (r1)[font=\large] at (1.5,2.5) {$worksIn$};     
				\draw[->]  (b) -- (c);
				\node (r2) [font=\large] at (3.8,2.5){$coAuthor$};     
				\draw[->]  (f) -- (d);
				\node (r3) [font=\large] at (5,1.5) {$worksIn$};     
				\draw[->]  (b) -- (e); 
				\node (r4) [font=\large] at (1.5,0.5) {$hasDegree$};  
				\draw[->]  (c) -- (f);
				\node (r5) [font = \large] at (5.7,3.4) {$hadAdvisor$};     
			\end{tikzpicture}   
			\captionof{figure}[]{Query pattern}
			\label{fig:query}
		\end{subfigure}
	\end{minipage}
	\caption{Running example of a toy Knowledge Graph of academic collaborations and a corresponding graph pattern query.}
	\label{fig:runningEg}
\end{figure*}
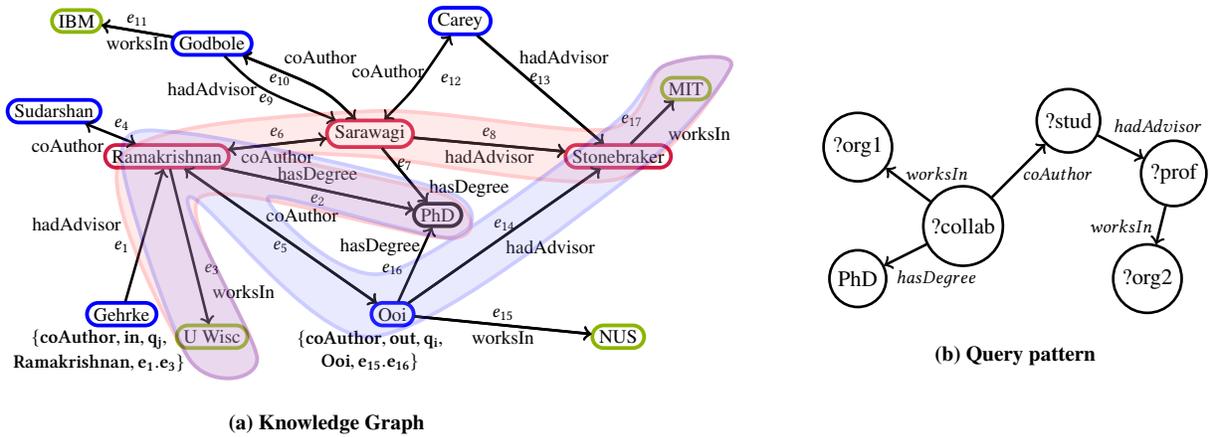

\subsubsection*{Knowledge Graph}

A knowledge graph (KG) $G \left(V, E, R, \mathcal{I}\right)$ is a directed
graph with a set of vertices $V$, a set of edges $E$, a set of relationships
$R$ and a function $\mathcal{I}: E \to \mathcal{N}$ associating each edge to a
unique identifier given as $\mathcal{I}(e) = e_i$ with $e_i \in \mathcal{N}$.
Each edge, $e\left(u,l(e),v\right)$, is labeled with $l(e)$ $(\in R)$ that
represents the relation between vertex $u$ and $v$. The complete description of
an edge is $(e_i, u, l(e), v)$. However, we will often use either $(u, l(e),
v)$ or $e_i$ to refer to a particular edge with identifier $i$. Note that this
representation closely corresponds to the RDF model  used commonly to model a
KG. Each entity or property value represented as a vertex in KG has an
associated URI or a literal, and we assume that they are assigned a unique
identifier (i.e., through $\mathcal{I}(\cdot)$). A labeled edge corresponds to
a \emph{fact}, also represented as a \emph{triple} $\langle u, l(e),v \rangle$
where $u$, $l(e)$, and $v$ are respectively the \emph{subject},
\emph{predicate} and \emph{object} of the fact.

\subsubsection*{Graph Query}

Queries on the KG are posed as graph patterns, and the task is to find
\emph{subgraphs} that match the pattern.  The graph query language SPARQL is
used to formally express a query as a collection of \emph{triple patterns}.
Similar to a triple, a \emph{triple pattern} also consists of a subject, a
predicate and an object.  The subject and object can be variables or can be
bound to a particular vertex of the KG. A predicate can also be a variable or
can be one of the labels of the edges of the KG. A graph query is formulated as
a conjunction of triple patterns and the aim is to find all possible bindings
to the variables of the triple patterns.  Thus, each edge in the query graph
corresponds to a triple pattern in the KG that needs to be matched.

In this paper, we restrict ourselves to core \emph{basic graph pattern} (BGP)
queries in SPARQL. For the fragment of SPARQL graph pattern queries we
consider, it is also known that they can be expressed using SPJ
(Select-Project-Join) fragment of relational
algebra~\cite{cyganiak2005,tripleProv}. We leave it for future work to adopt
the ideas of \cite{tripleProv,tripleProvExtension} to extend to full SPARQL. We
also do not consider path expression queries since their provenance derivation
is tantamount to enumerating all paths between vertices in a graph which is
known to be intractable~\cite{arenas2012}. 

\subsubsection*{Graph Query Answer}

The answer of a query $A(Q)$ is a collection of subgraphs of KG that
\emph{match} the query pattern $Q$. To formally define a \emph{matching}
subgraph, consider a query $Q=\{t_{1}, \dots,t_{n}\}$ as a collection of $n$
triple patterns. We define a surjective mapping $v_{S}: Q \to E_{S}$ which maps
each triple pattern of the query $Q$ to an edge of subgraph $S(V_{S}, E_{S})$.
A subgraph $S$ is said to match a query pattern $Q$, if $\forall t_{i} \in Q,
\exists e_{j} \in S \mathrm{~such~that~}v_{s}(t_{i}) = e_{j}$.  This mapping
enforces the query pattern constraints. In other words, mapping ensures that if
two triple patterns $t_{i}$ and $t_{j}$ of a query have common node(s), then
the corresponding edges $v_{S}(t_{i})$ and $v_{S}(t_{j})$ should preserve the
same restriction.

\section{The \name Framework}
\label{sec:framework}

Our goal is to maintain the results and their provenance for a set of standing
queries that have been registered for maintenance in the system, as the
underlying KG undergoes \emph{updates}, i.e., insertion/deletion of
edges/nodes.  In this section, we delineate the provenance model used and the
key idea of maintaining \emph{potential matches}.  In
Sec.~\ref{sec:registration}, we discuss the \emph{offline} phase involving
registration of a standing query.  When an update occurs, \name operates in the
\emph{online} phase as discussed in Sec.\ref{sec:update}.

Throughout the paper, we will use a running example of a toy knowledge graph
depicting relationships between academic researchers and a query pattern
illustrated in Fig.~\ref{fig:runningEg}.  Facts such as \lit{Sarawagi} had
\lit{Stonebraker} as her advisor are represented using a directed edge labeled
\lit{hadAdvisor} connecting the corresponding vertices and is stored as a
triple $\langle$\labl{Sarawagi}, \labl{hadAdvisor}, \labl{Stonebraker}$\rangle$
in RDF. There are relationships such as \lit{coAuthor}  represented using a
bi-directional edge, which correspond to two triples, e.g.,
$\langle$\labl{Sarawagi}, \labl{coAuthor}, \labl{Ramakrishnan}$\rangle$ {and}
$\langle$\labl{Ramakrishnan}, \labl{coAuthor}, \labl{Sarawagi}$\rangle$. 

We define an example graph pattern query on the above KG which seeks to return
professors as well as all collaborators of their students. Note that it only
accepts professors and collaborators currently working in an organization. It
can be written in SPARQL as follows:
\begin{lstlisting}
SELECT ?prof ?collab WHERE {
	?stud hadAdvisor ?prof.
	?prof worksIn ?org2.
	?collab coAuthor ?stud.
	?collab hasDegree PhD.
	?collab worksIn ?org1. }
\end{lstlisting}
The query graph corresponding to the above query pattern is given in
Fig.~\ref{fig:query}.

\subsection{Why and How Provenance}

\begin{table}
	\caption{Query result and provenance of the running example}
	\label{tab:queryresults}	
	\resizebox{1\columnwidth}{!}{%
		\begin{tabular}{|c|c||m{2.6cm}|m{2.2cm}|}
			\toprule
			\bf ?prof & \bf ?collab & \bf \emph{why} provenance & \bf \emph{how} provenance \\
			\midrule
			Stonebraker & Ramakrishnan & $\{\{e_{2},e_{3},e_{6},e_{8},e_{17}\},$ $\{e_{2},e_{3},e_{5},e_{14},e_{17}\}\}$ & $e_{2}.e_{3}.e_{6}.e_{8}.e_{17} + e_{2}.e_{3}.e_{5}.e_{14}.e_{17}$ \\
			\bottomrule 
		\end{tabular}
	}
\end{table}

The two important classes of query provenance, namely, the \emph{why}
provenance and the \emph{how} provenance~\cite{Cheney2009ProvenanceID} in
graphs are next illustrated using our running example KG and the graph pattern
query from Fig.~\ref{fig:runningEg}. Consider Table~\ref{tab:queryresults} that
lists the results and their provenance of running the query on the example KG. 
	
The why-provenance is simply a set of sets comprising of edges contributed in a
particular derivation of a specific result item.  The how-provenance, on the
other hand, symbolically encodes the interaction of edges involved in
derivations of an answer. As shown in Table~\ref{tab:queryresults}, the
how-provenance is expressed as a polynomial with each monomial representing
bindings of edges that lead to the answer $\langle$\labl{Stonebraker,
Ramakrishnan}$\rangle$.  Here, two different sets of edge combinations
$\left\{e_2, e_3, e_6, e_8, e_{17}\right\}$ and $\left\{e_2, e_3, e_5, e_{14},
e_{17}\right\}$ lead to the same result item. In Fig.~\ref{fig:kg}, subgraphs
corresponding to these two different edge bindings are shaded in red and blue
respectively.  Since how-provenance subsumes and captures more information
about generated results than why-provenance, \name uses the how-provenance
model represented through provenance polynomials. 

\begin{figure}
	\begin{subfigure}[t]{0.47\columnwidth}
		\centering
			\begin{tikzpicture}[scale=0.7, transform shape, font={\fontsize{12pt}{12}\selectfont}, line width=0.4mm]
				\tikzstyle{every node} = [rounded rectangle, draw,, line width = 0.5mm]
				\node[draw=blue] (x3) at +(1.5,3.5) {Carey};
				\node[draw=applegreen] (x2) at +(1,0.5) {MIT};
				\node[draw=alizarin] (x1) at +(0,2) {Stonebraker};
				\node[draw=alizarin] (x4) at +(3,2) {Sarawagi};
				\node (x5) at +(3.5,0.5) {PhD};
				\tikzstyle{every node} = [rounded rectangle]
				\draw[->] (x3) -- (x1);
				 \node(R1) at (0.5,2.75) {$hadAdvisor$};   
				\draw[<->] (x3) -- (x4);
				 \node (R2) at (3.6,2.75) {$coAuthor$};   
				\draw[->] (x1) -- (x2);
				\node (R3) at (1.2,1.3) {$worksIn$};   
				\draw[->] (x4) -- (x5);
				\node (R4) at (3.3,1.3) {$hasDegree$};   
			\end{tikzpicture}    
			\caption{$G_{1}$: a 1:1 potential match}
			\label{fig:g1}
	\end{subfigure}
		\begin{subfigure}[t]{0.47\columnwidth}
			\centering
			\begin{tikzpicture}[scale=0.7, transform shape, font={\fontsize{12pt}{12}\selectfont},line width=0.4mm]
				\tikzstyle{every node} = [rounded rectangle, draw,, line width = 0.5mm]
				\node (x2) at +(1,0.5) {PhD};
				\node[draw=alizarin] (x1) at +(0,2) {Sarawagi};
				\node[draw=alizarin] (x4) at +(4,2) {Godbole};
				\tikzstyle{every node} = [rounded rectangle]
				\draw[->] (x4) ..controls(2,2.5).. (x1);
				 \node(R1) at (2,2.7) {$hadAdvisor$};   
				\draw[<->] (x1) ..controls(2,1.5).. (x4);
				 \node (R2) at (2,1.9) {$coAuthor$};   
				\draw[->] (x1) -- (x2);
				\node (R4) at (1.5,1.1) {$hasDegree$};   
			\end{tikzpicture}    
			\caption{$G_{2}$: a 1:m potential match}
			\label{fig:g2}
		\end{subfigure}
	\caption{Result of query in Fig.~\ref{fig:kg} and its potential matches.}
	\label{fig:valid}
\end{figure}

\subsection{Potential Matches}

Fundamental to our approach is the concept of \emph{potential matches} of a
query $Q$. Intuitively, these potential matches of a query correspond to
subgraphs of the KG that partially match the given query graph pattern. We
maintain the potential matches of a query $Q$ that require \emph{only a single
edge insertion} to transform them to full matches, i.e., actual results of $Q$.
We now formally define potential matches, and prove how they can be maintained. 

\begin{definition}[Potential Match]
	Any subgraph $S$ of the knowledge graph $G$ which can become an actual match of a query $Q$
	after a \emph{single} edge insertion is called a \textbf{potential match}.\hfill $\Box$
\end{definition}

Consider a query $Q$ consisting of $n$ triple patterns.  Suppose $S \in G$ is a
subgraph such that $n-1$ triple patterns of the query $Q$ match and only $1$
triple pattern does not match.  After a new edge $e$ that matches with the
unmatched triple pattern of $Q$ is added to the $S$, the new subgraph, $S^{*} =
S \cup \{e\}$ now becomes an actual match for $Q$. In other words, $S^{*} \in
A(Q),$ where $A(Q)$ is the answer-set of the query $Q$.  $S$ is, therefore, a
potential match.

Next, we categorize the potential matches based on the number of matches
resulting from the newly inserted edge. Either the newly added edge matches a
single triple pattern of the $Q$ or it matches many triple patterns in the
query. 

\begin{definition}[1:1 Potential Match]
	If a newly added edge $e$ to a potential match $S$ matches only one triple
	pattern of the query $Q$ such that $S^{*} = S \cup \{e\}$ becomes an actual
	match of $Q$, then the subgraph $S$ is called a \textbf{1:1 potential
	match}, denoted by \pmo.\hfill $\Box$
\end{definition}	
	
\begin{definition}[1:m Potential Match]
	If a newly added edge $e$ to a potential match $S$ matches multiple triple
	patterns of the query $Q$ such that $S^{*} = S \cup \{e\}$ becomes an
	actual match of $Q$, then the subgraph $S$ is called a \textbf{1:m
	potential match}, denoted by \pmm.\hfill $\Box$
\end{definition}

We illustrate these important concepts using Fig.~\ref{fig:query}. The
subgraphs $G_{1}$ and $G_{2}$, shown in Fig.~\ref{fig:g1} and Fig.~\ref{fig:g2}
respectively, are subgraphs of the example toy KG, and are potential matches of
the parent query. Although both are potential matches, $G_1$ is anticipating
the insertion of \emph{only one} fact $qp_1^1: \langle ?collab, worksIn, ?org1
\rangle$ to match the query. On the other hand, $G_2$ needs to get match for
\emph{two} triple patterns $qp_2^1: \langle ?prof, worksIn , ?org2 \rangle$ and
$qp_2^2: \langle ?collab, worksIn, ?org1 \rangle$ in the query. 

Interestingly, a single fact insertion can complete both kinds of potential
matches. For instance, if a fact $(Sarawagi, worksIn, IITB)$ is added to the
KG, it will match to the query pattern $qp_1^1$, resulting $G_1$ to fully
satisfy the query. At the same time, the same fact also matches both $qp_2^1$
and $qp_2^2$ making even $G_2$ to satisfy the query. 

\subsection{Query Categorization}

We categorize registered queries into two types, \emph{regular} and
\emph{MultiMap}, based on the kind of  potential matches it may have. Both of
these are \emph{conjunctive} queries and they differ only in the distribution
of predicates among triple patterns. We formally define them next.

\begin{definition}[Regular Query] 
	A query that has only 1:1 potential matches (and no 1:m potential match) is
	called a \textbf{regular query}, denoted by \regq.
\end{definition}

For example, the query in Fig.~\ref{fig:query} is not a regular query as its
two triple patterns $\langle?prof, worksIn, ?org2 \rangle$ and $\langle?collab,
worksIn, ?org1 \rangle$ have the same predicate $worksIn$.

\begin{definition}[MultiMap Query]
	A query that has 1:m potential matches (possibly in addition to 1:1
	potential matches) is called a \textbf{MultiMap query}, denoted by \mmq.
\end{definition}

The example query, shown in Fig.~\ref{fig:query}, is a \mmp query with $G_{2}$,
in Fig.\ref{fig:g2}, as a \pmm.  A necessary condition for a conjunctive query
to be a \mmp query is to have at least two triple patterns with common
predicate. However, this condition is not sufficient.  We discuss those
characteristics in Sec.~\ref{subsec:queryChecker} while discussing the
registration process of \mmp queries.

The separation of potential matches and the queries into different classes is
important since they need to be handled differently. Our strategy of query
registration process is based on the following lemmas.

\begin{lemma}
	\label{obs:ob1}
	A \pmo can satisfy one and only one subquery of size $n-1$ of a given
	parent query $Q$ of size $n$.
\end{lemma}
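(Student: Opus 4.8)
The plan is to argue directly from the definitions of \pmo and of a matching subgraph. Let $S$ be a \pmo of the parent query $Q = \{t_1, \dots, t_n\}$. By the definition of a potential match, there is a single edge insertion $e$ such that $S^* = S \cup \{e\}$ is an actual match of $Q$, and because $S$ is specifically a \emph{1:1} potential match, the edge $e$ matches exactly one triple pattern of $Q$, say $t_k$. First I would observe that this forces $S$ itself to be an exact match of the subquery $Q_{-k} := Q \setminus \{t_k\}$ of size $n-1$: since $S^*$ matches $Q$ via a surjective mapping $v_{S^*}$ and $e = v_{S^*}(t_k)$ is the only edge of $S^*$ that the pattern $t_k$ maps to, restricting $v_{S^*}$ to $Q_{-k}$ gives a valid matching of $Q_{-k}$ onto the edges of $S = S^* \setminus \{e\}$, and all the shared-node constraints among the patterns of $Q_{-k}$ are inherited from those in $Q$. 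Hence $S$ satisfies \emph{at least} one subquery of size $n-1$, namely $Q_{-k}$.

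The second half is the uniqueness claim: $S$ cannot satisfy any other size-$(n-1)$ subquery $Q_{-j}$ with $j \neq k$. Here I would argue by contradiction. Suppose $S$ also matches $Q_{-j}$ via some mapping; then $S$ matches every triple pattern of $Q$ except possibly $t_k$ (from the first part $S$ matches all of $t_1,\dots,t_n$ other than $t_k$), and now also matches every pattern except possibly $t_j$. Combining the two, $S$ matches \emph{all} $n$ triple patterns of $Q$ — i.e., $S$ is already an actual match of $Q$ before the insertion of $e$. But then adding $e$ would mean $S$ needed \emph{zero} edge insertions to become a match, contradicting that $S$ is a potential match (a potential match, by definition, requires a single edge insertion and is by implication not already a full match — otherwise the categories would be vacuous, and the running discussion of "anticipating the insertion of one fact" makes clear the intended reading). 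One should be a little careful about the degenerate possibility that $t_j$ and $t_k$ are "the same" pattern up to the mapping, or that $e$ could also serve as the image of some other pattern; but that is exactly excluded by the 1:1 hypothesis, which says $e$ matches \emph{only} $t_k$.

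I expect the main obstacle to be purely definitional precision rather than any real combinatorial difficulty: the statement is almost immediate once one fixes what "satisfies a subquery" means in terms of the surjective pattern-to-edge mapping, and what it means for a potential match to be "one edge away." The subtle point to nail down is that in the 1:1 setting the missing pattern $t_k$ is \emph{uniquely determined} by $S$ — if two different patterns $t_k, t_{k'}$ were each individually the unique unmatched pattern for $S$, then $S$ would match $n-1$ patterns in two incompatible ways, again forcing $S$ to be a full match and contradicting the potential-match assumption. So the argument reduces to: (i) exhibit the one subquery $Q_{-k}$ that $S$ matches, and (ii) show any second such subquery collapses $S$ into a full match of $Q$, which is impossible. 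I would present it as a short two-paragraph proof with those two steps, invoking only the definitions of matching subgraph, potential match, and \pmo given above.
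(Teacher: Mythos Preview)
Your proposal is correct and follows essentially the same approach as the paper's own proof: identify the unique triple pattern $t_k$ whose completion turns $S$ into a full match (so $S$ satisfies $Q_{-k}$), then argue by contradiction that satisfying any second subquery $Q_{-j}$ would force $S$ to already match all of $Q$, contradicting the potential-match assumption. Your version is in fact more careful than the paper's about the role of the surjective mapping and the 1:1 hypothesis, but the skeleton of the argument is the same.
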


\begin{proof}
	Consider a parent query $Q = \{ t_{1}, \dots, t_{n}\}$ of size $n$, i.e.,
	having $n$ triple patterns. A subquery $Q_{i}$ of size $n-1$ is generated
	by removing the triple pattern $t_{i}$ of $Q$, i.e., $Q_{i} = \{ t_{1},
	\dots, t_{i-1}, t_{i+1}, \dots, t_{n} \}$.

	Assume that a subgraph $S$ is a \pmo such that completing the triple $t_i$
	will complete its match to the parent query $Q$.  By definition, $S$
	satisfies the subquery $Q_i$.

	Assume that $S$ also satisfies another subquery $Q_j (j \neq i)$ of $Q$.
	This implies that $S$ has all the matching edges of $Q_j$.  Since $Q_j$
	misses only the triple pattern $t_j$, it must contain $t_i$.  This is a
	contradiction and, therefore, $S$ cannot match any other subquery $Q_j$.

	Hence, $S$ matches one and only one subquery of $Q$.
	\hfill{}
\end{proof}

\begin{lemma}
	\label{obs:ob2}
	A \pmm cannot satisfy any subquery of size $n-1$ of a given parent query
	$Q$ of size $n$.
\end{lemma}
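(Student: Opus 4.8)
The plan is to argue by contradiction, exploiting the defining property of a \pmm: a single inserted edge $e$ matches \emph{multiple} (i.e.\ at least two) distinct triple patterns of the parent query $Q$. First I would set up notation paralleling the proof of Lemma~\ref{obs:ob1}: let $Q = \{t_1,\dots,t_n\}$, let $S$ be a \pmm, and let $e$ be the edge whose insertion turns $S$ into an actual match $S^* = S \cup \{e\}$; by the definition of \pmm there exist at least two indices $i \neq j$ such that $e$ matches both $t_i$ and $t_j$ in the mapping witnessing $S^* \in A(Q)$. Note that this forces the triple patterns $t_i$ and $t_j$ to be satisfiable by one and the same edge, which is exactly the common-predicate situation described earlier in the excerpt.

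Next I would suppose, for contradiction, that $S$ satisfies some subquery $Q_k = Q \setminus \{t_k\}$ of size $n-1$. Satisfying $Q_k$ means $S$ already contains matching edges for every triple pattern in $Q$ except possibly $t_k$. In particular, $S$ contains edges matching all of $t_1,\dots,t_n$ other than $t_k$, so at most one triple pattern of $Q$ — namely $t_k$ — lacks a match in $S$. But we just observed that $e \notin S$ is needed to match at least two distinct triple patterns $t_i, t_j$ of $Q$; since $e$ is a single edge not present in $S$, \emph{both} $t_i$ and $t_j$ must be among the patterns that $S$ fails to match (otherwise, if $S$ already matched, say, $t_i$ with some edge $e' \neq e$, then $e$ would only be contributing the match for $t_j$, contradicting that $e$ matches multiple \emph{needed} patterns — here I would be slightly careful and phrase the \pmm property as: $S$ needs $\ge 2$ additional triple-pattern matches, all supplied by the one new edge $e$). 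Hence $S$ is missing matches for at least the two patterns $t_i$ and $t_j$ with $i \neq j$, contradicting the conclusion that $S$ misses at most the single pattern $t_k$.

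From this contradiction I would conclude that no such $Q_k$ exists, i.e.\ a \pmm satisfies no subquery of size $n-1$, which is the statement of the lemma. The main obstacle I anticipate is pinning down the precise reading of ``$e$ matches multiple triple patterns'': one must rule out the degenerate case where $S$ \emph{already} contains edges matching some of $t_i, t_j$ and $e$ merely happens to \emph{also} match them. The clean way around this is to work with the number of \emph{unmatched} triple patterns of $Q$ relative to $S$ — a \pmm is precisely a subgraph with $\ge 2$ unmatched patterns all of which become matched by adding the single edge $e$ — and then observe that satisfying a size-$(n-1)$ subquery forces the count of unmatched patterns to be $\le 1$. Making that counting argument explicit, and reconciling it with the informal ``1:m'' terminology, is the only delicate point; the rest is immediate.
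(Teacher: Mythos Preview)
Your proposal is correct and follows essentially the same contradiction argument as the paper: assume a \pmm $S$ satisfies some size-$(n-1)$ subquery $Q_k$, observe that $S$ then matches $n-1$ triple patterns and misses at most one, and derive a contradiction with the fact that a \pmm by definition leaves $\ge 2$ triple patterns unmatched (all to be filled by the single inserted edge). The paper's version is terser---it simply says ``by definition, this is a contradiction'' without spelling out the count---whereas you are more careful about the possible ambiguity in ``$e$ matches multiple triple patterns''; your proposed resolution via counting unmatched patterns is exactly the right reading and makes the argument cleaner.
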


\begin{proof}
	Consider a parent query $Q = \{ t_{1}, \dots, t_{n}\}$ of size $n$, i.e.,
	having $n$ triple patterns. A subquery $Q_{i}$ of size $n-1$ is generated
	by removing the triple pattern $t_{i}$ of $Q$, i.e., $Q_{i} = \{ t_{1},
	\dots, t_{i-1}, t_{i+1}, \dots, t_{n} \}$.

	Assume that a subgraph $S$ is a \pmm such that completing the set of
	triples $\{ t_{i_1}, \dots, t_{i_k} \}$ will complete its match to the
	parent query $Q$.

	Assume that $S$ satisfies a subquery $Q_j$ of $Q$.  Since the subquery
	$Q_j$ misses only one triple pattern, $t_j$, this implies that $S$ matches
	the rest of the $n-1$ triple patterns.  By definition, this is a
	contradiction.

	Hence, $S$ cannot match any subquery of $Q$.
	\hfill{}
\end{proof}

\begin{lemma}
	\label{obs:ob3}
	After an edge insertion, a \pmm satisfies a parent query $Q$ if and only if
	the \pmm satisfies all the subqueries of $Q$.
\end{lemma}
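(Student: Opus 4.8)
My plan is to prove the two implications of the biconditional separately, throughout writing $S^{*}=S\cup\{e\}$ for the \pmm $S$ together with the newly inserted edge $e$.

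\emph{The easy direction ($\Rightarrow$).} Suppose $S^{*}$ is an actual match of $Q$ after inserting $e$, witnessed by a mapping $v\colon Q\to E_{S^{*}}$ respecting the pattern constraints. For any subquery $Q_{i}=Q\setminus\{t_{i}\}$, the restriction of $v$ to $Q_{i}$ still respects every constraint among $t_{1},\dots,t_{i-1},t_{i+1},\dots,t_{n}$, so the subgraph of $S^{*}$ consisting of the edges in this restriction's image (with their endpoints) is a match of $Q_{i}$. Hence $S^{*}$ satisfies all subqueries of $Q$.

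\emph{The substantive direction ($\Leftarrow$).} Since $S$ is a \pmm, by the definition of a $1{:}m$ potential match there is a set $M\subseteq Q$ of $k\ge 2$ triple patterns such that $S$ already matches the reduced query $Q\setminus M$, and a single edge matched simultaneously against all patterns of $M$ (consistently with that partial match) would complete $S$ into a match of $Q$. By Lemma~\ref{obs:ob2}, $S$ alone matches none of the $(n-1)$-pattern subqueries. Now assume $S^{*}$ satisfies every $Q_{i}$. Fix any $t_{i}\in M$: the subquery $Q_{i}$ still contains every pattern of $M\setminus\{t_{i}\}$, and since $S$ did not satisfy $Q_{i}$ whereas $S^{*}$ does, the witnessing match of $Q_{i}$ in $S^{*}$ must use $e$ and must route all patterns of $M\setminus\{t_{i}\}$ through $e$. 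Letting $t_{i}$ range over $M$, it follows that $e$ can be consistently matched against every pattern of $M$ at once---so $e$ is exactly the completing edge anticipated by the \pmm---and therefore $S^{*}$ is an actual match of $Q$.

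\emph{Where the difficulty lies.} The delicate step is the claim that the match witnessing ``$S^{*}$ satisfies $Q_{i}$'' is forced to send all of $M\setminus\{t_{i}\}$ to the single new edge $e$. Two ingredients should make this work: Lemma~\ref{obs:ob2} forbids the match from avoiding $e$ altogether, and the rigidity of a \pmm---in $S$ the only unmatched patterns are those in $M$, and the very reason $S$ is a \pmm and not merely a \pmo is that the patterns of $M$ share the predicate that pins them onto one common edge with one consistent pair of endpoints---forces any use of $e$ inside such a match to absorb the whole of $M\setminus\{t_{i}\}$. I would therefore spend most of the effort showing that the partial match of $Q\setminus M$ inside $S$ is essentially unique, so that the $n$ separate subquery witnesses can be glued into a single mapping for $Q$. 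Two smaller points of bookkeeping remain: a \pmm may admit more than one valid completing configuration, in which case the argument is repeated for each candidate set $M$; and for $t_{i}\notin M$ the hypothesis ``$S^{*}$ satisfies $Q_{i}$'' is automatically implied once $e$ completes $M$, so it imposes no extra constraint.
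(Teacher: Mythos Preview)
Your forward direction is essentially the paper's. For the converse, however, the paper takes a much shorter route than you do: it simply picks any two distinct subqueries $Q_i$ and $Q_j$ that $S^{*}$ satisfies, observes that the match for $Q_i$ already covers every pattern except $t_i$, and that the match for $Q_j$ covers $t_i$ (since $j\neq i$), so all $n$ patterns are matched in $S^{*}$. The paper never invokes Lemma~\ref{obs:ob2}, never isolates the set $M$ of missing patterns, and never argues about which patterns must land on the new edge $e$. Your approach is more scrupulous about whether the separate subquery witnesses can be glued into one \emph{consistent} mapping for $Q$---a point the paper's two-line argument passes over in silence---but the price is the ``uniqueness of the partial match of $Q\setminus M$'' step you yourself flag as the main difficulty, and that uniqueness need not hold in general. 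In short: the paper treats ``satisfies'' at the level of individual triple patterns rather than full joint mappings, which makes the converse nearly immediate; your route is more rigorous in intent but introduces an obligation that is both hard to discharge and unnecessary for the paper's version of the argument.
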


\begin{proof}
	(Only if) Suppose a \pmm $S$ satisfies a query $Q$ consisting of $n$ triple
	patterns. This implies that $S$ has a match for every triple pattern $t_i
	\in Q, \ i = 1, \dots, n$.  Consider any subquery $Q_i$ of $Q$.  Since it
	consists of the rest $n-1$ triple patterns except $t_i$, $S$ necessarily
	satisfies $Q_i$. 

	(If) Since $S$ satisfies all subqueries, it satisfies, in particular, two
	subqueries $Q_i$ and $Q_j$ ($j \neq i$) of $Q$.  The triple patterns $t_l,
	\ l = 1, \dots, n, \ l \neq i$ match $S$ due to $Q_i$.  Since the triple
	pattern $t_i$ match due to $Q_j$, together all the $n$ triple patterns
	match.  Hence, $S$ matches the entire query $Q$.
	\hfill{}
\end{proof}

Next, we show how these observations can be used to preprocess queries during
their registration for maintenance.

\section{Registering Standing Queries}
\label{sec:registration}

In this section, we discuss in detail the process of registering a query.
These queries are called \emph{standing queries} or \emph{continuous
queries}~\cite{standing} since their results are always relevant and need to be
updated when the KG undergoes changes.  When a standing query is registered for
maintenance, apart from evaluating its results, \name performs the following
additional steps: (a)~generation of all subqueries obtained by removing one
triple pattern at a time from the query and identification of all the potential
matches, (b)~annotation of the connection points in the knowledge graph, and
(c)~precomputation of a query processing plan for the subqueries so that they
can be recomputed efficiently. In the rest of the section, we describe these
steps in detail, first for regular queries and then for \mmp queries since they
need special handling.

\subsection{Subquery Generation}
\label{subsec:sqGenerator}

Since we are interested in only those subqueries that are just one triple
pattern less than the registered query being processed, they are all of size
$n-1$. From the running example query in Fig.~\ref{fig:query}, we can generate
a number of subqueries by systematically removing each triple pattern specified
in the query. This results in $5$ subqueries as listed in
Fig.~\ref{fig:subquery}.  Notice that in some cases the removal of a triple
pattern can lead to generation of two disconnected subqueries.

If a given query is of size $n$, i.e., has $n$ triple patterns, removal of a
triple pattern can generate at most two subgraphs in the resulting subquery.
Denoting these two subgraphs as $SQ_1$ and $SQ_2$, without any loss of
generality, we assume that $|SQ_1| = k$ and $|SQ_2| = n-k-1$ where $0 \le k \le
n-1$. The subqueries can be categorized into 4 types depending on the size of
their subgraphs and the resulting connection points. These types are
illustrated in Fig.~\ref{fig:category}. In each of these illustrations, the
gray-shaded node is part of the subgraph it is shown connected to; they have
been shown separately only to highlight the subquery generation process. 

\begin{itemize}

	\item \textbf{Type I:} These subqueries have a singleton subgraph $k = 0$,
		and are generated when the triple pattern corresponding to the edge of
		a leaf node of the query graph is removed. For instance, the subquery
		shown in Fig.~\ref{fig:worksIn1}, generated by removing \labl{worksIn}
		edge. This leads to $|A(SQ_1)|$ connection points in the knowledge
		graph. The general subquery structure for this type is given in
		Fig.~\ref{fig:type1}.

	\item \textbf{Type II:} In this case, one of the subgraphs contains only
		one triple pattern, i.e., $k = 1$, as shown in Fig.~\ref{fig:type2}. In
		this case, we may end up identifying all edges with the predicate of
		the triple pattern in $SQ_2$ as connection points which could be a
		potentially large number of instances in the knowledge graph. We avoid
		this, and instead only identify the connection points derived from
		$A(SQ_1)$, leading to only $|A(SQ_1)|$ connection points.
		Fig.~\ref{fig:hadAdvisor} shows an example.
	
	\item \textbf{Type III:} As shown in Fig.~\ref{fig:type3}, both $SQ_1$ and
		$SQ_2$ contain at least two triple patterns, i.e., $2 \le k < n-2$. For
		instance, removal of $coAuthor$ generated two subqueries, shown in
		Fig.~\ref{fig:coAuthor}, each of size $2$. This results in at most
		$|A(SQ_1)| + |A(SQ_2)|$ connection points. 

	\item \textbf{Type IV:} In this last type, the removal of a triple pattern
		from the query does not disconnect the query graph, i.e., $k = n - 1$.
		As shown in Fig.~\ref{fig:type4}, each match of $SQ_1$ can result in
		two potential matches leading to a total of $2 \cdot |A(SQ_1)|$
		connection points in the graph since they may be annotated from both
		sides.

\end{itemize}

It is important to note that in each case, both $SQ_1$ and $SQ_2$ subgraphs are
effectively treated as distinct query patterns and are maintained by \name.

Potential matches are identified as subgraphs of the KG that match the
subqueries of a parent query.

\begin{figure*}
	\begin{minipage}{0.19\textwidth}
		\begin{subfigure}[t]{1\textwidth}
			\centering
			\begin{tikzpicture}[scale=0.45, transform shape]
				\tikzstyle{every node} = [circle, draw=black, font=\Large]
				\node(a) at (0,3) {?org1};
				\node[fill=gray!30, thick] (b) at (2,1.5) {?collab};
				\node (c) at (4,3.5) {?stud};
				\node (d) at (5.5,0.5) {?org2};
				\node (e) at (0, 0.5) {PhD};
				\node (f) at (6,2.5) {?prof};
				\tikzstyle{every node} = [circle]
				\draw[->,dashed, line width=0.5mm]  (b) -- (a);
				\node (r1)[fontscale=1.5] at (1.5,2.5) { worksIn};     
				\draw[->]  (b) -- (c);
				\node (r2) [fontscale=1.5] at (3.8,2.5){coAuthor};     
				\draw[->]  (f) -- (d);
				\node (r3) [fontscale=1.5] at (5,1.5) {worksIn};     
				\draw[->]  (b) -- (e); 
				\node (r4) [fontscale=1.5] at (1.5,0.5) {hasDegree};  
				\draw[->]  (c) -- (f);
				\node (r5) [fontscale=1.5] at (5.7,3.4) {hadAdvisor};     
			\end{tikzpicture}   
			\caption{Removing worksIn}
			\label{fig:worksIn1}
		\end{subfigure}
	\end{minipage}
	\begin{minipage}{0.19\textwidth}
		\begin{subfigure}[t]{1\textwidth}
			\centering
			\begin{tikzpicture}[scale=0.45, transform shape]
				\tikzstyle{every node} = [circle, draw=black, font=\Large]
				\node(a) at (0,3) {?org1};
				\node[fill=gray!30, thick] (b) at (2,1.5) {?collab};
				\node[fill=gray!30, thick] (c) at (4,3.5) {?stud};
				\node (d) at (5.5,0.5) {?org2};
				\node (e) at (0, 0.5) {PhD};
				\node (f) at (6,2.5) {?prof};
				\tikzstyle{every node} = [circle]
				\draw[->]  (b) -- (a);
				\node (r1)[fontscale=1.5] at (1.5,2.5) { worksIn};     
				\draw[->,dashed, line width=0.5mm]  (b) -- (c);
				\node (r2) [fontscale=1.5] at (3.8,2.5){coAuthor};     
				\draw[->]  (f) -- (d);
				\node (r3) [fontscale=1.5] at (5,1.5) {worksIn};     
				\draw[->]  (b) -- (e); 
				\node (r4) [fontscale=1.5] at (1.5,0.5) {hasDegree};  
				\draw[->]  (c) -- (f);
				\node (r5) [fontscale=1.5] at (5.7,3.4) {hadAdvisor};     
			\end{tikzpicture}   
			\caption{Removing coAuthor}
			\label{fig:coAuthor}
		\end{subfigure}
	\end{minipage}
	\begin{minipage}{0.19\textwidth}
		\begin{subfigure}[t]{1\textwidth}
			\centering
			\begin{tikzpicture}[scale=0.45, transform shape]
				\tikzstyle{every node} = [circle, draw=black, font=\Large]
				\node(a) at (0,3) {?org1};
				\node (b) at (2,1.5) {?collab};
				\node[fill=gray!30, thick] (c) at (4,3.5) {?stud};
				\node (d) at (5.5,0.5) {?org2};
				\node (e) at (0, 0.5) {PhD};
				\node[fill=gray!30, thick] (f) at (6,2.5) {?prof};
				\tikzstyle{every node} = [circle]
				\draw[->]  (b) -- (a);
				\node (r1)[fontscale=1.5] at (1.5,2.5) { worksIn};     
				\draw[->]  (b) -- (c);
				\node (r2) [fontscale=1.5] at (3.8,2.5){coAuthor};     
				\draw[->]  (f) -- (d);
				\node (r3) [fontscale=1.5] at (5,1.5) {worksIn};     
				\draw[->]  (b) -- (e); 
				\node (r4) [fontscale=1.5] at (1.5,0.5) {hasDegree};  
				\draw[->, dashed, line width=0.5mm]  (c) -- (f);
				\node (r5) [fontscale=1.5] at (5.7,3.4) {hadAdvisor};     
			\end{tikzpicture}   
			\caption{Removing hadAdvisor}
			\label{fig:hadAdvisor}
		\end{subfigure}
	\end{minipage}
	\begin{minipage}{0.19\textwidth}
		\begin{subfigure}[t]{1\textwidth}
			\centering
			\begin{tikzpicture}[scale=0.45, transform shape]
				\tikzstyle{every node} = [circle, draw=black, font=\Large]
				\node(a) at (0,3) {?org1};
				\node (b) at (2,1.5) {?collab};
				\node (c) at (4,3.5) {?stud};
				\node (d) at (5.5,0.5) {?org2};
				\node (e) at (0, 0.5) {PhD};
				\node[fill=gray!30, thick] (f) at (6,2.5) {?prof};
				\tikzstyle{every node} = [circle]
				\draw[->]  (b) -- (a);
				\node (r1)[fontscale=1.5] at (1.5,2.5) { worksIn};     
				\draw[->]  (b) -- (c);
				\node (r2) [fontscale=1.5] at (3.8,2.5){coAuthor};     
				\draw[->, dashed, line width=0.5mm]  (f) -- (d);
				\node (r3) [fontscale=1.5] at (5,1.5) {worksIn};     
				\draw[->]  (b) -- (e); 
				\node (r4) [fontscale=1.5] at (1.5,0.5) {hasDegree};  
				\draw[->]  (c) -- (f);
				\node (r5) [fontscale=1.5] at (5.7,3.4) {hadAdvisor};     
			\end{tikzpicture}   
			\caption{Removing other worksIn}
			\label{fig:worksIn2}
		\end{subfigure}
	\end{minipage}	
	\begin{minipage}{0.19\textwidth}
		\begin{subfigure}[t]{1\textwidth}
			\centering
			\begin{tikzpicture}[scale=0.45, transform shape]
				\tikzstyle{every node} = [circle, draw=black, font=\Large]
				\node(a) at (0,3) {?org1};
				\node[fill=gray!30, thick] (b) at (2,1.5) {?collab};
				\node (c) at (4,3.5) {?stud};
				\node (d) at (5.5,0.5) {?org2};
				\node (e) at (0, 0.5) {PhD};
				\node (f) at (6,2.5) {?prof};
				\tikzstyle{every node} = [circle]
				\draw[->]  (b) -- (a);
				\node (r1)[fontscale=1.5] at (1.5,2.5) { worksIn};     
				\draw[->]  (b) -- (c);
				\node (r2) [fontscale=1.5] at (3.8,2.5){coAuthor};     
				\draw[->]  (f) -- (d);
				\node (r3) [fontscale=1.5] at (5,1.5) {worksIn};     
				\draw[->,dashed, line width=0.5mm]  (b) -- (e); 
				\node (r4) [fontscale=1.5] at (1.5,0.5) {hasDegree};  
				\draw[->]  (c) -- (f);
				\node (r5) [fontscale=1.5] at (5.7,3.4) {hadAdvisor};     
			\end{tikzpicture}   
			\caption{Removing hasDegree}
			\label{fig:hasDegree}
		\end{subfigure}
	\end{minipage}		
	\caption{Subqueries generated by removing one edge at a time.}
	\label{fig:subquery}
\end{figure*}

\begin{figure}
	\centering
	\begin{minipage}{0.23\textwidth}
		\begin{subfigure}[t]{\textwidth}
			\centering
			\begin{tikzpicture}[scale=1, transform shape]
				\node[itria] (0) at +(0,-0.75) [fontscale=4.00]{SQ1};
				\tikzstyle{every node} = [circle, fill=gray!30]
				\node (1) at +(0,0) {?x};
				\node[fill=none, draw=black] (2) at +(1.5,0) {?y};
				\draw[line width=1.8pt, -, dashed]  (1) -- (2) node [midway,below,fill=none] {R};
				\draw[-]  (1) -- (0) ;     
			\end{tikzpicture}    
			\caption{Type  I} \label{fig:type1}
		\end{subfigure}
	\end{minipage}
	\begin{minipage}{0.23\textwidth}
		\begin{subfigure}[t]{\textwidth}
			\centering
			\begin{tikzpicture}[scale=1, transform shape]
				\node[itria] (0) at +(-2,-0.75) [fontscale=4.00]{SQ1};
				\tikzstyle{every node} = [circle, fill=gray!30]
				\node (1) at +(-2,0) {?x};
				\node[fill=none, draw=black] (2) at +(0,0) {?z};
				\node[fill=none, draw=black] (3) at +(-1,0) {?y};
				\draw[line width=1.8pt, -, dashed]  (1) -- (3) node [midway,below,fill=none] {R};
				\draw[-]  (2) -- (3) node [midway,below,fill=none] {E};  
				\draw[-]  (1) -- (0) ;     
			\end{tikzpicture}  
			\caption{Type II} \label{fig:type2}
		\end{subfigure}
	\end{minipage}
	\begin{minipage}{0.23\textwidth}
		\begin{subfigure}[t]{\textwidth}
			\centering
			\begin{tikzpicture}[scale=1, transform shape]
				\node[itria] (0) at +(0,-0.75) [fontscale=4.00] {SQ1};
				\node[itria] (3) at +(1.5,-0.75) [fontscale=4.00] {SQ2};
				\tikzstyle{every node} = [circle, fill=gray!30]
				\node (1) at +(0,0) {?x};
				\node (2) at +(1.5,0) {?y};
				\draw[line width=1.8pt, -, dashed]  (1) -- (2) node [midway,below,fill=none] {R};
				\draw[-]  (1) -- (0) ;     
				\draw[-]  (2) -- (3) ;     
			\end{tikzpicture}    
			\caption{Type III} \label{fig:type3}
		\end{subfigure}
	\end{minipage}
	\begin{minipage}{0.23\textwidth}
		\begin{subfigure}[t]{\textwidth}
			\centering
			\begin{tikzpicture}[scale=1, transform shape]
				\node[itria] (0) at +(0.75,-0.75) [fontscale=4.00]{SQ1};
				\tikzstyle{every node} = [circle, fill=gray!30]
				\node (1) at +(0,0) {?x};
				\node (2) at +(1.5,0) {?y};
				\draw[line width=1.8pt, -, dashed]  (1) -- (2) node [midway,below,fill=none] {R};
				\draw[-]  (1) -- (0) ;     
				\draw[-]  (2) -- (0) ;     
			\end{tikzpicture}  
			\caption{Type IV} \label{fig:type4}
		\end{subfigure}
	\end{minipage}
	\caption{Types of subqueries}
	\label{fig:category}
\end{figure}

\subsection{Connection Point Annotations}
\label{subsec:annotate}

After identifying all the potential matches of a query, \name annotates the
entities corresponding to their connection points. Each connection point
results in its entity to be annotated with a tuple \textsf{(expRel, dir, sqId,
result, provPoly)}, where \textsf{expRel} is the relation predicate the
connection point expects, \textsf{dir} is the direction of the edge (either
outgoing or incoming), \textsf{sqId} is the identifier of the subquery that the
connection point belongs to, \textsf{result} is the value of the query variable
that needs to be projected, and \textsf{provPoly} is the provenance polynomial
corresponding to the potential match.

For instance, consider the subqueries generated by removing the \labl{coAuthor}
edge in Fig. \ref{fig:coAuthor}. For ease of exposition only two connections
are considered, one for each subquery. The node \lit{Ooi} in the KG in
Fig.~\ref{fig:runningEg} matches one of the connection points, and is annotated
with the tuple (coAuthor, out, $q_{i}$, Ooi, $e_{15}.e_{16}$) as it is
expecting an outgoing edge with label \labl{coAuthor}. Similarly, node
\lit{Gehrke} has the annotation (hadAdvisor, in, $q_{j}$, Ramakrishnan,
$e_{1}.e_{3}$).  Note that for \lit{Gehrke} the \textsf{result} attribute has
value \lit{Ramakrishnan} as it would match the projected variable \lit{?prof}
of the parent query.

\subsection{Subquery Execution Plan}
\label{subsec:planBuilder}

As new edges are added to the KG, it is crucial that the results of all the
subqueries of registered queries are continually and efficiently maintained.
Instead of generating an execution plan each time subquery results have to be
updated, \name generates an execution plan for each subquery at the time of
query registration. These plans are then merged together to form a \emph{global
plan} that tries to exploit shared execution segments. 

\subsubsection*{Local Plan Generation}

In order to generate the execution plan for each subquery, \name adapts AND-OR
trees~\cite{andortree}. These trees are comprised of two types of nodes, AND
and OR nodes. They are present at alternate levels in the tree, i.e., each AND
node has an OR node as its parent and also as children. An AND node represents
a way of deriving the intermediate expression corresponding to its parent OR
node, and vice versa.  These trees effectively capture all possible derivations
of the subquery expressions, with OR nodes corresponding to an intermediate
expression in the evaluation of the final subquery expression.

After the AND-OR tree is constructed, it is required to choose the \emph{best}
query plan of the subquery. A greedy heuristic is employed to choose the best
plan. Each intermediate expression is assigned a score based on its estimated
cardinality. The execution tree is then traversed in a bottom-up manner. At
each level $i$, the node corresponding to the intermediate expression with the
lowest cardinality is greedily chosen. No other node at level $i$ is
considered. At the next level $i + 1$, its parent nodes are explored.  The
traversal continues till the root of the tree is reached.

Consider the parent query shown in Fig.~\ref{fig:query}. Removal of the triple
pattern $\langle ?stud, hadAdvisor,?prof\rangle$  generates the subquery with
triple patterns containing $hasDegree$, $worksIn$ and $coAuthor$ as predicates.
A snippet of partial AND-OR tree for this subquery is shown in Fig.
\ref{fig:plan}. In interests of space, relations $hasDegree$, $worksIn$ and
$coAuthor$ are denoted by $P_{1}$, $P_{2}$ and $P_{3}$ respectively, As shown,
AND nodes on the second level correspond to intermediate expressions $(P_{1}
\bowtie_\text{subject} P_{2})$ and $(P_{2} \bowtie_\text{subject} P_{3})$.
Similarly, the root node corresponds to the final expression of the subquery.
The children of the root node represent the two possible derivations, viz.,
$((P_{2} \bowtie_\text{subject} P_{3}) \bowtie_\text{subject} P_{1})$ and
$((P_{1} \bowtie_\text{subject} P_{2}) \bowtie_\text{subject} P_{3})$.

\subsubsection*{Global Plan Computation}

Once the best plan is chosen for a subquery, it is merged into the global query
plan. The global plan is built to reuse the computation of shared intermediate
expressions across all subqueries.  In a local plan, the subtree rooted at a
given node gives its derivation.  We start traversing the best local plan in a
top-down manner. We do a breadth-first traversal down the best local plan. At
each node $k$, we look for an equivalent expression in the global plan. If an
equivalent expression already exists, it implies that the derivation of the
expression corresponding to node $k$ exists.  So, we do not explore another
derivation of the same expression. If no equivalent expression is found, we add
the node $k$ along with the corresponding equivalent expression to the global
tree and move to the next node. The global plan can have multiple roots. Each
root represents the final expression of one of the subqueries.

\subsubsection*{Cardinality Estimation} 

In order to obtain high-quality execution plans for queries, an accurate
cardinality estimation is crucial. The cardinality estimation used in \name is
specifically designed for RDF knowledge graphs, and captures the neighborhood
of entities in the graph. Specifically, \name models queries as
\emph{star-chain} shapes, i.e., a sequence of star shaped
queries~\cite{meimaris2017extended}.  Based on this, the statistics are
maintained as counts of $(s,o)$ pairs that have the same \emph{characteristic
pair}~\cite{csPair} $P_{c}$, given by:
$P_{c}(S_{c}(s), S_{c}(o)) =$ $\{ (S_{c}(s), S_{c}(o), p) \mid S_{c}(o) \neq
\phi \wedge \exists p: (s,p,o) \in E \}$
where $S_c$ is the characteristic set.  The cardinality estimate of a query,
represented as a sequence of $k$ star-shaped fragments $C_1,\dots,C_k$, is the
sum of the estimates of the pairs of fragments:
\begin{align*}
	card = \sum_{i=1}^{k-1} count(P_{c}(C_{i},C_{i+1}))
\end{align*}

\begin{figure}
	\tikzstyle{andNode} = [ellipse, draw,  
	 text width=3.5em,text centered, minimum height=1.8em,fontscale=0.07,line width=0.6pt]
	\tikzstyle{orNode} = [rectangle, draw,  
	 minimum height=1.5em, fontscale=0.07,line width=0.6pt]
	\centering
	\begin{tikzpicture}[scale=0.5]
		\node[orNode, text centered, font =\footnotesize] (root) at (4,4) {$\{P_{1},P_{3}, P_{2}\}$};
		\node[andNode, font =\footnotesize, text centered] (1) at (2,2) {$(P_{2},P_{3}), P_{1}$};    
		\node[andNode, font =\footnotesize, text centered] (2) at (6,2) {$(P_{1},P_{2}), P_{3}$};    
		\node[orNode, font =\footnotesize, text centered] (3) at (2,0) {$\{P_{2},P_{3}\}$};    
		\node[orNode, font =\footnotesize, text centered] (4) at (6,0) {$\{P_{1},P_{2}\}$};    
		\node[andNode, font =\footnotesize,text width=3em] (5) at (2,-2) {$(P_{1}, P_{2})$};    
		\node[andNode, font =\footnotesize, text width=3em] (6) at (6,-2) {$(P_{2},P_{3})$};    
		\node[orNode, font = \footnotesize] (p2) at (1,-4) {$P_{1}$: hasDegree};    
		\node[orNode, font = \footnotesize] (p1) at (4.8,-4) {$P_{2}$: worksIn};    
		\node[orNode, font=\footnotesize] (p) at (8.5,-4) {$P_{3}$: coAuthor};    
		\draw[->] (p) .. controls (8.75, -0.6) .. (2);
		\draw[->] (p2) .. controls (-0.5, -0.57) .. (1);
		\draw[->] (p) -- (6);
		\draw[->] (p1) -- (6);
		\draw[->] (p1) -- (5);
		\draw[->] (p2) -- (5);
		\draw[->] (5) -- (4);
		\draw[->] (6) -- (3);
		\draw[->] (3) -- (1);
		\draw[->] (4) -- (2);
		\draw[->] (1) -- (root);
		\draw[->] (2) -- (root);
	\end{tikzpicture}
	\captionof{figure}{AND-OR tree. (Boxes represent OR nodes while
	ellipse represent AND nodes.) }
	\label{fig:plan}
\end{figure}

\subsection{Processing \mmp Queries}
\label{subsec:queryChecker}

The above discussed steps involving subquery generation, connection point
annotation and global execution plan generation work for maintaining regular
queries. \mmp queries, however, require special handling.  The presence of more
than one triple with the same predicate is a necessary but not a sufficient
condition to identify a conjunctive query as a \mmp query.  We next use an
example of a non-\mmp query $Q$ in Fig.~\ref{fig:invalid} to highlight the
characteristics of \mmp queries.

\subsubsection*{Characteristics of \mmp Queries}

The query $Q$ has two pairs of triple patterns with the same predicate:
($\langle?x1, P_{4},?x2 \rangle$, $\langle?x5, P_{4}, ?x6\rangle$) and
($\langle?x2, P_{5},C1 \rangle$, $\langle?x6, P_{5}, C2\rangle$).  There are
two important observations:

\begin{itemize}

	\item If $C_{1} \neq C_{2}$ and relation $P_{5}$ is a one-to-one relation
		then triples $\langle?x2, P_{5},C1 \rangle$ and $\langle?x6, P_{5},
		C2\rangle$ cannot be mapped to the same edge of a KG. In other words,
		$?x2$ and $?x6$ cannot be bound to a same node. Propagating this
		constraint backwards, triples $\langle?x1, P_{4},?x2 \rangle$ and
		$\langle?x5, P_{4}, ?x6\rangle$ cannot be mapped to the same edge as
		$?x2$ and $?x6$ cannot be mapped to same node of KG.  So, triple
		patterns participating in two different paths, comprising  of
		one-to-one relations, leading to different bound variables cannot be
		mapped to same edges of KG. Such a query cannot be \mmp. For, instance,
		paths from $?x1$ to $C1$ and $?x5$ to $C2$ has same relation sequence
		$P_{4}$, $P_{5}$ but none of the involved triple patterns can be mapped
		to same KG edge. Thus, $Q$ is not a \mmp query.

	\item Suppose relations $P_{1},P_{2}$ and $P_{3}$ are same, say
		\labl{isLocatedIn}.  Then node $?x1$ and $?x5$ cannot be mapped to same
		node of KG as a location cannot be located in itself. Thus, triples
		$\langle?x1, P_{4},?x2 \rangle$ and $\langle?x5, P_{4}, ?x6\rangle$
		cannot be mapped to same edge.  Therefore, a query cannot have a \pmm
		involving variables $?x$ and $?z$, if a path $P_{i}, \dots, P_{j}$ from
		$?x$ to $?z$ is comprised of the same \emph{asymmetric} predicate,
		i.e., $P_{i} = \dots = P_{j} = P$ where $(a P b \implies \neg b P a)$.

\end{itemize}

\subsubsection*{Identification of \mmp Queries}

Based on the above two observations, we identify \mmp queries and the
corresponding predicate $P$ shared by the triple patterns, $t_{i}, \dots,
t_{j}$, that can match to the same edge. A \mmp query can have both \pmo and
\pmm. A \pmo would satisfy one of the subqueries (Lemma~\ref{obs:ob1}) but a
\pmm would not satisfy any subquery before an appropriate edge $e$ is inserted
(Lemma~\ref{obs:ob2}). However, when an appropriate edge is inserted, the \pmm
would satisfy all the subqueries and the parent \mmp query simultaneously
(Lemma~\ref{obs:ob3}). Thus, to identify new matches to the parent \mmp query,
we examine the new results of a subquery.  Instead of examining all subqueries,
only those that are generated by removing a triple pattern that shares the same
predicate, are enough to identify a \pmm.

\subsection{Processing to Support Deletions}
\label{subsec:indexBuilder}

An edge deletion requires not only the updates to the parent query results, but
also to the existing potential matches and the connection point annotations in
the KG. In the running example (Fig.~\ref{fig:runningEg}), if edge $e_{15}$ is
deleted, then the node \lit{Ooi} is no longer a connection point. \name
constructs two inverted index structures: \textsf{edgeToResult} and
\textsf{edgeToCP} to efficiently support this. The \textsf{edgeToResult} index
maintains a mapping from an edge to the answer set items containing the edge in
their how provenance (i.e., the presence of the edge contributed to the item
being in the answer set). The \textsf{edgeToCP} index simply maintains the
correct state of all the connection points corresponding to the edge. 

To build the \textsf{edgeToResult}, the following two sets of information for
each query $Q_{i}$ are maintained: (a)~$R_i$: the set of result items of query
$Q_i$ along with their corresponding provenance polynomials, and (b)~$E_i$: the
set of all edges involved in the provenance polynomials of all the result items
$R_i$.

Based on these, an index is built on edges $E_i$ that appear in the result set
of at least one query.  The index entry corresponding to an edge $e$ consists
of the set $\{ \langle Q_{i}, L_{i} \rangle \}$ where $L_{i}$ is the list of
result items such that $e$ appears in its associated provenance polynomial.

Similarly, the \textsf{edgeToCP} index is built by considering the provenance
polynomials of the potential matches. Each participating edge is mapped to a
set of connection points which are annotated with those provenance polynomials.

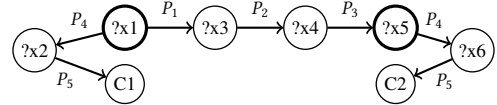
\begin{figure}
	\centering
	\begin{tikzpicture}[scale=0.6, transform shape]
		\tikzstyle{every node} = [circle, draw=black,fontscale=1.5,line width=0.5pt]
		\node (x3) at +(2,2) {?x3};
		\node (x2) at +(-2,1.5) {?x2};
		\node[line width=1.2pt] (x1) at +(0,2) {?x1};
		\node (x4) at +(4,2) {?x4};
		\node[line width=1.2pt] (x5) at +(6,2) {?x5};
		\node (x6) at +(7.7,1.5) {?x6};
		\node (x21) at +(0,0.75) {C1};
		\node (x61) at +(6,0.75) {C2};
		\tikzstyle{every node} = [circle, draw=none,fontscale=1.5]
		\draw[->,thick]  (x1) -- (x3) node [midway,above,fill=none] {$P_{1}$};   
		\draw[->,thick]  (x3) -- (x4) node [midway,above,fill=none] {$P_{2}$};   
		\draw[->,thick]  (x5) -- (x6) node [midway,above,fill=none] {$P_{4}$};   
		\draw[->,thick]  (x1) -- (x2) node [midway,above,fill=none] {$P_{4}$};   
		\draw[->,thick]  (x4) -- (x5) node [midway,above,fill=none] {$P_{3}$};   
		\draw[->,thick]  (x2) -- (x21) node [midway,below,pos=0.2,fill=none] {$P_{5}$};   
		\draw[->,thick]  (x6) -- (x61) node [midway,below,pos=0.2,fill=none] {$P_{5}$};   
	\end{tikzpicture}    
	\captionof{figure}{Example of a non-\mmp query.}
	\label{fig:invalid}
\end{figure}

\section{Update Processing}
\label{sec:update}

\subsection{Insertion}

When a new fact is added to the KG, we are first required to find the affected
queries and then update the indexes and annotations for handling subsequent
updates.  Suppose that edge $e(u,P,v)$ is added to a KG. We start by fetching
relevant annotations of $u$ that have $P$ as \textsf{expRel} and ``out'' as
\textsf{dir}. Suppose, the relevant annotation has query id \textsf{sqId} as
$q_{i}$.  Recall that each annotation corresponds to a \pmo of one of the
subqueries. Based on the type of subquery $q_{i}$, we proceed as follows:

\begin{itemize}

	\item \emph{Type I:} If $q_{i}$ is a Type I subquery, then we do not need
		to perform any check further as the insertion of $e$ with expected
		relation $P$ has completed the potential match. For example, in
		Fig.~\ref{fig:kg}, if edge $\langle Sarawagi, worksIn, IITB \rangle$ is
		inserted, then based on the fact that $Sarawagi$ is expecting an
		outgoing edge label $worksIn$ is enough to confirm the transition of
		the corresponding potential matches.

	\item \emph{Type II:} Recall that this class has one subquery $SQ_{1}$  of
		size $n-2$ and an single edge with predicate $E$. We need a simple
		lookup to check if node $v$ has an correct incident edge with label
		$E$. The presence of such edge leads to a parent query answer.
		
	\item \emph{Type III:} We examine the annotation of $v$ as well.  If $v$
		has an annotation with $P$ as \textsf{expRel} and ``in" as
		\textsf{dir}, and query id of complimentary subquery of subquery
		$q_{i}$ as \textsf{sqId}. If the required \textsf{sqId} is present,
		then $e$ changes the potential match to an actual match.  For instance,
		the connection points $Gehrke$ and $Ooi$ in Fig. \ref{fig:kg}
		correspond to a pair of complimentary subqueries.  Therefore, if the
		edge $\langle Ooi, coAuthor, Gehrke \rangle$ is added to the KG, it
		would result into a new answer $\left( Ramakrishnan,Ooi\right)$.

	\item \emph{Type IV:} We check the annotation of $v$ and if we find the
		same subquery id as that in $u$ and with \textsf{expRel} as $P$ and
		``in'' as \textsf{dir}, we can conclude that $e$ has completed the
		potential match.

\end{itemize}

The above checks are for \pmo only. If the predicate $P$ of the newly added
edge is involved in any of the \mmp queries, then it needs special handling.
Before reporting new matches to a \mmp query, the subquery results are updated.
The global query plan is browsed in a bottom-up manner and the incremental
results of all the intermediate expressions involving the predicate $P$ are
computed.  Consequently, we find the incremental results of all the concerned
subqueries. Recall that each \mmp query is associated with one of its
subqueries $S$. The subquery was chosen in a manner that if the newly added
edge $e(u,P,v)$  matches the missing triple pattern of the subquery, then all
the new answers of the subquery would match the parent query, and, thus, would
be reported as parent query answers. 

The new results also correspond to new connection points which are then
annotated.  Both the indexes \textsf{edgeToResult} and \textsf{edgeToCP} are
also updated.

\subsection{Deletion}

The deletion algorithm is based on the standard \emph{filter-and-refine}
paradigm where candidate queries are first generated and later confirmed.
Assume that an edge $e_{d}$ is deleted. First, we identify the provenance
polynomials in which this edge is involved and the corresponding results.  This
is done by fetching the entry corresponding to $e_{d}$ in the
\textsf{edgeToResult} index. The next task is to confirm if these results
actually get \emph{affected} by the edge deletion.  For this, the polynomial is
evaluated with $e_{d}$ set to $0$ and the other edge symbols remaining at $1$.
An answer vanishes if and only if all the monomials corresponding to it get
evaluated to $0$.  In other words, the answer no longer remains valid. Since
the connection points are annotated with the provenance polynomials, they get
updated as well.  Finally, the entry of $e_{d}$ is dropped from both the
indexes \textsf{edgeToResult} and \textsf{edgeToCP}.

Suppose edge $e_{14}: \langle$\lit{Ooi, hadAdvisor, Stonebraker}$\rangle$ is
deleted from the example KG in Fig.~\ref{fig:runningEg}. We evaluate the
provenance polynomial  of the query answer, given in
Table~\ref{tab:queryresults}, by assigning $0$ to $e_{14}$ and $1$ to the rest.
The resultant value would be $1$ which indicates a valid derivation
$\{e_{2},e_{3},e_{6},e_{8},e_{17}\}$ of the answer exists and, hence,
$\langle$\lit{Stonebraker, Ramakrishnan}$\rangle$ is still an answer. However,
if the edge $e_{2}$ is deleted then the polynomial would collapse to $0$,
denoting no valid derivation exists, i.e., the pair is no longer a valid
answer.

\subsection{Proof of Correctness}

\subsubsection*{Insertion}

We assume a single edge insertion and argue the correctness for the two types
of queries separately.  Since maintenance is done after every insertion
operation, the correctness extends to multiple insertions.

Lemma~\ref{obs:ob1} shows that a \pmo of a regular query will be of size
exactly $n-1$ and will satisfy a subquery.  Hence, once a \pmo is identified
correctly, and an edge insertion completes it, it must match the complete
parent query.  The corresponding subgraph match and its provenance polynomial
is updated to reflect the current state.

A \mmp query may have a \pmm.  If it has a \pmm, using Lemma~\ref{obs:ob3}, we
see that if an edge insertion satisfies the subqueries, it must also satisfy
the parent query.  Thus, once a \pmm is identified correctly in a \mmp query,
the correctness is ensured since the subquery results are also updated.

\subsubsection*{Deletion}

If after the deletion of an edge $e$, a provenance polynomial evaluates to
$0$, it implies that $e$ was present in each corresponding derivation
(monomial). Conversely, as long as a derivation exists that does not involve
$e$, its corresponding monomial will continue to evaluate to $1$, and the
whole polynomial will be non-zero.  If $e$ is not involved in any
derivation of any answer, then if will not have any entry in
\textsf{edgeToResult} index and there will be no impact on the query results.

\section{Experimental Evaluation}
\label{sec:expts}

\subsection{Setup}
\label{subsec:setup}

\subsubsection{Datasets}

We consider two widely used benchmark datasets.

\begin{itemize}[nosep,leftmargin=*]

	\item {YAGO2 \cite{yago2,hoffart2011yago2}:} It is an automatically built
		ontology gathering facts from sources like
		Wikipedia, GeoNames, etc.
		\comment{It has $\sim23$ million facts with $78$
		unique edge labels over $\sim8.8$ million real-world entities.}
	
	\item {DBpedia3.9 \cite{dbpedia}:} It is the structured
		counterpart of Wikipedia.
		\comment{It comprises of $\sim117$ million facts
		over $\sim32$ million real-world entities. The number of
		different relationships among the entities is $\sim53000$.}
	
\end{itemize}	

Table~\ref{tab:dataset} summarizes the statistics of these two datasets.

\begin{table}[t]
	\centering
	\caption{Statistics of datasets}
	\label{tab:dataset}
	\resizebox{\columnwidth}{!}{%
	\begin{tabular}{ccccccc}
		\toprule
		\bf Dataset & Vertices & Edges & Predicates & Queries & Avg. Query Size & Subqueries \\
		\midrule
		\yago	 & 8.8M	& 23M & 78 & 4 & 6.25 & 26 \\
		\dbpedia	&	32M	& 117M & 53K & 215 & 3.90 & 879 \\
		\bottomrule
	\end{tabular}
	}
\end{table}

\subsubsection{Query Sets}

For \yago, we used a set of queries on which the RDF-3X was originally
validated~\cite{neumann2010rdf}. We chose $4$ queries (ids A1, A2, B1, B2) that
are fairly large and complex.

For \dbpedia, we built the query workload from the real world queries available
from 2014 USEWOD (\url{https://eprints.soton.ac.uk/379401/}) query logs.  This
query log is a collection of over $253$K actual \texttt{http} requests to the
\dbpedia SPARQL endpoint.  Out of these, $12$K are SPARQL SELECT queries. After
filtering out nested, ordering and aggregate queries, we obtained a set of
$5,490$ queries.  We further removed queries of size $1$, those which were not
parsed by a SPARQL parser, and duplicates to obtain the final workload of $215$
queries.

\subsubsection{KG Update Workload}

For each KG, we generated insertion workloads in the following manner.
Starting from the original graph, we randomly selected a pair of vertices not
yet connected, and connected them with a randomly selected predicate. Note that
we did not consider any predicate for insertion which is \emph{not} part of at
least one query, since they do not affect the results of any query and can be
trivially answered by a simple lookup. For deletion workloads, we randomly
deleted edges of the predicates that are involved in \emph{at least one} of the
queries.  Using this procedure, we generated workloads of size $10000$ updates
each.

\subsubsection{Implementation}

We conducted all the experiments on a machine with 32-core 2.1 GHz CPU, 512 GB
RAM, and 1 TB hard drive. Our implementation was single-threaded and in Java.
We chose \neo 3.3.9 to store the knowledge graphs due to its capability to
model property graphs. \name codebase is publicly available.
\footnote{\url{https://github.com/gaurgarima/HUKA}}.

\subsubsection{Baseline Systems}

Wylot et al.~\cite{tripleProv} highlighted the shortcomings of \emph{named
graphs} to support the provenance polynomials and devised \tripleprov to
support provenance computation over a customized RDF-engine
\textsf{dipLODocus}~\cite{diplo}.  We contacted the authors for their code but
unfortunately, they responded that \tripleprov is no longer available.  We,
therefore, did a best effort implementation of \tripleprov on top of property
graphs supported by a leading graph database engine \neo and used it as a
baseline.  We are calling this baseline as \neo.  We also used two existing
frameworks that enable provenance-aware query processing on
\textsf{PostgreSQL}, a versatile middle-ware \gprom~\cite{arab2018gprom} and a
recently introduced \textsf{PostgreSQL} extension
\provsql~\cite{senellart2018provsql}.  For the \textsf{PostgreSQL} based
baseline systems, we modeled the KG as relational database by grouping all the
triples of a particular predicate in a table, i.e., for each predicate $P$ of
the KG we construct a table $R_{P}$ and populate it with the triples having
predicate $P$. We also translated SPARQL queries into SQL queries by
interpreting each pair of triple sharing either subject or object with each
other as a join operation on the corresponding predicate relations. 

\subsection{Analysis of Query Processing}

\begin{table}[t]
	\caption{Statistics of query size partitions in \dbpedia.}
	\label{tab:dbpediaSizeStat}
	\centering
	\resizebox{\columnwidth}{!}{%
	\begin{tabular}{cccccc}
		\toprule
		\bf Query & \bf No. of & \bf No. of & \bf \#Intermediate & \bf Global & \bf Coverage \\ 
		\bf set & \bf parent & \bf subqueries & \bf expressions & \bf tree & \\ 
		& \bf queries & & \bf (w/o global plan) & \bf size & \\ 
		\midrule
		$QS_{3}$ & 77 & 231 &  462 & 158 &  2.164  \\ 
		$QS_{4}$ & 88 & 352 & 1056 & 377 & 6.982  \\ 
		$QS_{5}$ & 39 & 224 &  896 & 263 & 9.393  \\ 
		$QS_{6}$ & 10  & 65 & 325  & 112 &  5.895  \\ 
		\midrule
		$entire$ & 215 & 879 & 2781  &  866 & 7.664  \\ 
		\bottomrule
	\end{tabular}
	}
\end{table}

\subsubsection{Subquery Generator}

The subquery generator produced $26$ and $879$ subqueries out of $4$ and $215$
parent queries present in \yago and \dbpedia query sets respectively. Based on
the results of these subqueries, connection points were identified. For \yago,
subqueries resulted into $164$K connection points whereas \dbpedia queries
generated $6.5$M connection points.

\subsubsection{Global Execution Plan}

The global execution tree for \yago subqueries has $80$ nodes while for
\dbpedia, it has $866$ nodes.  Each node in the global execution tree
corresponds to an \emph{intermediate expression}.  To analyze the queries
better, we partitioned the query set of \dbpedia based on the query size. In
Table \ref{tab:dbpediaSizeStat}, the query set $QS_{i}$ is the set of \dbpedia
queries of size $i$.  The \emph{entire} query set corresponds to the full query
set.

To confirm our claim that subqueries are likely to overlap and, thus, the
global plan promotes re-usability, we compare the number of intermediate
expressions in the global plan to the total number of intermediate expressions
required to compute the final expression of all the subqueries. To compute the
latter, we sum up the sizes of all the subqueries as the size, i.e., number of
triple patterns, gives the number of joins required to compute it which, in
turn, correspond to the number of intermediate expressions. We have reported
the total number of intermediate expressions in the absence of a global plan in
Table \ref{tab:dbpediaSizeStat}. The global plan has cut down the intermediate
expression computation considerably.  The total number of intermediate
expressions is reduced by close to $70\%$.  (The column ``coverage'' will be
discussed in Sec.~\ref{sec:closer}.)

\subsubsection{Space Usage}

The \textsf{edgeToResult} inverted index sizes for \yago and \dbpedia are
$45$MB and $322$MB respectively while the \textsf{edgeToCP} index sizes are
$41$MB and $995$MB respectively.

\subsubsection{Time Consumption}

We took an average time of $6.9$ minutes and $10$ minutes to register \dbpedia
and \yago queries respectively.  Majority of the time while registering a query
is spent on execution plan generation as we exhaustively generate all possible
derivations of a subquery.  Larger the query size, more is the number of
derivations.  Since \yago queries are fairly large with an average size of
$6.25$ triples per query, whereas \dbpedia queries size has an average of
$3.90$ triples per query, it requires more time to register \yago queries than
\dbpedia.

It needs to be noted that queries are registered only once, though.  The
average time spent per update operation subsequently is much less as reported
next.

\subsection{Analysis of Update Processing}

\begin{table}[t]
	\caption{Comparison of average updation times.}
	\label{tab:baseline}
	\centering
	{\small
	\begin{tabular}{c|r|rrr}
		\toprule
		\bf Dataset & \name & \gprom & \provsql & \neo \\
		\midrule
		\yago  & 0.119\,s & 25.121\,s & 75.657\,s & 5.709\,s \\ 
		\dbpedia & 1.252\,s & 5.217\,s & 6.870\,s & 99.318\,s \\
		\bottomrule
	\end{tabular}
	}
\end{table}

\subsubsection{Baseline Comparison}

The running times (averaged over $5$ runs) per update operation taken by our
system \name and the various baselines on the two datasets are reported in
Table~\ref{tab:baseline}.  For \name, we recorded the average total time taken
to \emph{update} the parent queries \emph{and maintain} the supporting
structure. As none of the baseline systems support incremental update of query
provenance, we computed the average total time taken to execute all the
\emph{relevant} queries for a particular update request. The relevant queries
are those in which the updated edge predicate is present.

On \yago, \name takes less than a second to update the query results and their
provenance polynomials, whereas for \dbpedia, the time taken is slightly more.
Importantly, our approach is around $4$ to $48$ times faster than the nearest
baselines.

Since the query size directly translates to number of joins, \gprom and
\provsql require lesser time for \dbpedia.  However, for \name, the number of
affected queries for \dbpedia are generally higher as the total number of
queries ($4$ in \yago versus $215$ in \dbpedia) are higher.  This overshadows
the impact of lesser number of joins per query and, consequently, \name
requires more time for \dbpedia than \yago.  Unlike relational data, \neo uses
graph data organization and stores logically related data nearby. This results
in efficient graph traversal and, therefore, \neo requires considerably lesser
time for \yago since it is mostly impacted by the number of queries.

\begin{table}[t]
	\caption{\name under varying workloads.}
	\label{tab:workload}
	{\small
	\begin{tabular}{c|ccccc}
		\toprule
		\multirow{2}{*}{\bf Dataset} & Deletion & Deletion  & \multirow{2}{*}{Balanced} & Insertion & Insertion \\
		                             & -Heavy   & -Moderate & & -Moderate & -Heavy \\
		\midrule
		\yago    & 0.062\,s & 0.091\,s & 0.126\,s & 0.146\,s & 0.169\,s \\
		\dbpedia & 1.126\,s & 1.144\,s & 1.316\,s & 1.344\,s & 1.329\,s \\
		\bottomrule
	\end{tabular}
	}
\end{table}

\subsubsection{Varying Workloads}

We next study in detail the effect of different types of updates: deletion
versus insertion.  We generated multiple workloads (as described in
Sec.~\ref{subsec:setup}) of size $10000$ updates each with controlled ratio of
operations.  We used $5$ configurations of deletion to insertion ratios:
\textit{Insertion-Heavy} (1:9), \textit{Insertion-Moderate} (3:7),
\textit{Balanced} (5:5), \textit{Deletion-Moderate} (7:3), and
\textit{Deletion-Heavy} (9:1).  The average times are reported in
Table~\ref{tab:workload}.  For both the datasets, the average total time
gradually increases as we move from \emph{deletion-heavy} to
\emph{insertion-heavy} workload. The trend is expected as insertion is
computationally heavier than deletion. 

\subsubsection{Closer Look}
\label{sec:closer}

Next, to get a better insight of our system, we decided to have a closer look
at its different steps.  We chose the larger query set \dbpedia and the query
set partitions outlined in Table~\ref{tab:dbpediaSizeStat}. We treat each query
set $QS_i$ individually, i.e., built separate query execution plans, annotated
separate KGs, and constructed separate indexes for each query set.  

\begin{figure}[t]
\centering
	\includegraphics[width=0.75\columnwidth]{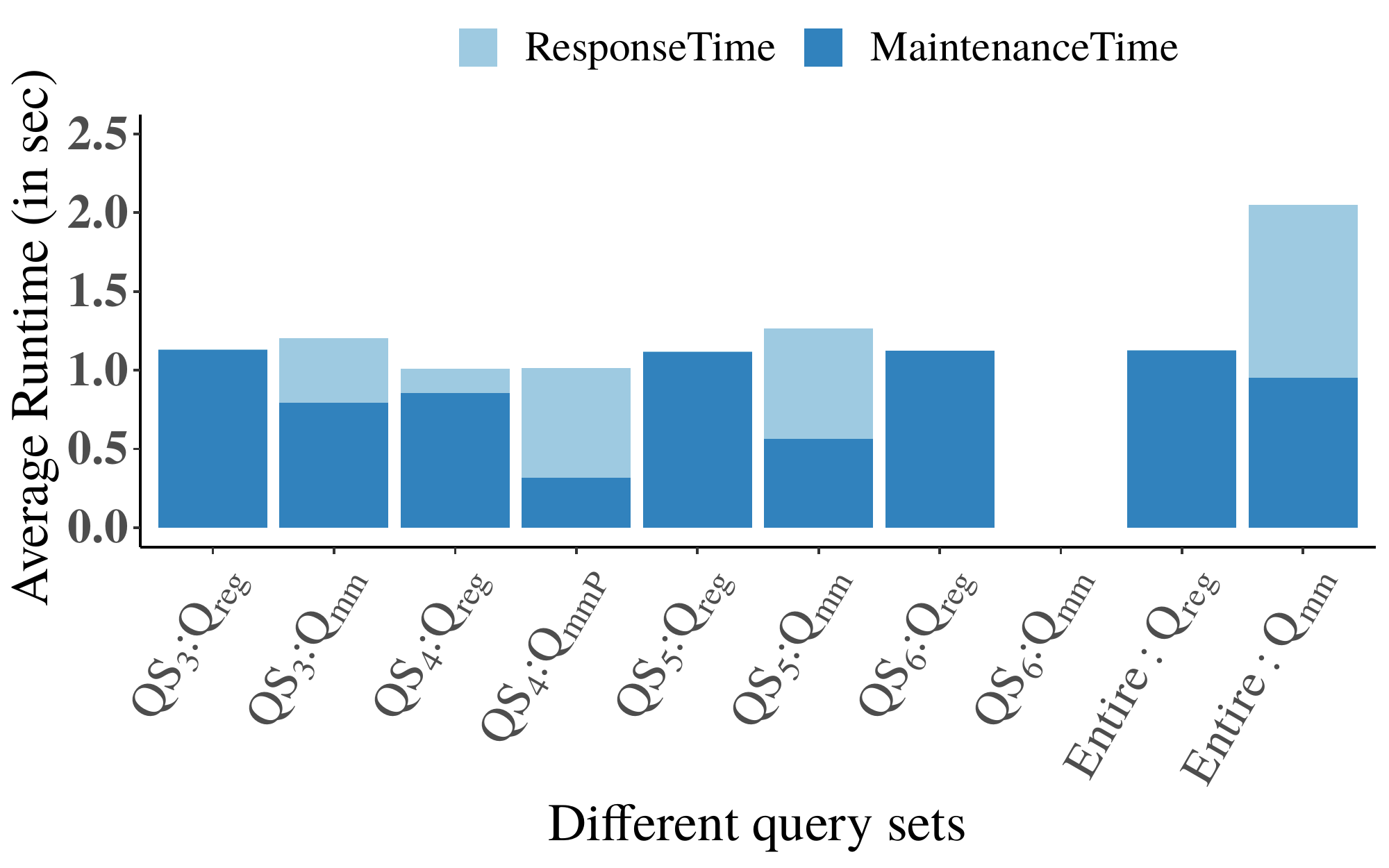}
	\caption{Time split of different subtasks for \dbpedia.}
	\label{fig:split}
\end{figure}

We divide the total running time into $2$ components: (a)~\emph{response time},
defined as the time taken to update the parent query result(s), and
(b)~\emph{maintenance time}, defined as the time taken to maintain the data
structures so as to ensure correctness of subsequent updates.  We can update
result of regular parent queries just by examining the relevant connection
points, whereas to update a \mmp parent query we need to update the subquery
result as well.  For regular queries, maintenance time is the sum of time
consumed to update subqueries, annotate graph and update inverted indexes.
However, for \mmp queries, maintenance time involves annotating graph and
update indexes.  The time splits in terms of percentage of the total time for
the \emph{balanced} workload are shown in Fig.~\ref{fig:split}.  (There is no
\mmp query of size $6$.) Across query sets regular queries exhibit lower
response time as compared to \mmp queries.  This happens as \mmp queries more
background work before the query results can be updated.  Overall, regular
queries require lesser absolute time to complete.

Importantly, the time to update subquery results was agnostic to the number of
subqueries of a query set $QS_{i}$.  To understand this, we closely inspected
the global plans of each query set $QS_{i}$. Intuitively, higher the number of
intermediate expressions to compute, higher is the subquery result updation
time. To quantify this, we defined \emph{coverage} of a global plan as the
average number of intermediate expressions required to be computed when an edge
is inserted. In other words, it is the ratio of the total number of non-leaf
nodes in the global tree to the total number of unique predicates involved in
the corresponding query set.  The \emph{coverage} values for $QS_{i}$ are given
in Table~\ref{tab:dbpediaSizeStat}. Thus, higher the \emph{coverage} of a query
plan, higher is the time to update the subquery results.

\section{Conclusions and Future Work}
\label{sec:conc}

In this paper, we have proposed a system to efficiently maintain provenance of
results of standing queries in knowledge graphs that are dynamic in nature.  We
used decomposition of queries into subqueries and inverted indices to quickly
update the answers and their provenance polynomials.  Experiments show the
effectiveness in large real-life knowledge graphs.

In future, we would like to work with more generic SPARQL queries that include
aggregates, nested queries, ordering, etc.


\bibliographystyle{ACM-Reference-Format}
\balance
\bibliography{../papers}


\begin{thebibliography}{45}


\ifx \showCODEN    \undefined \def \showCODEN     #1{\unskip}     \fi
\ifx \showDOI      \undefined \def \showDOI       #1{#1}\fi
\ifx \showISBNx    \undefined \def \showISBNx     #1{\unskip}     \fi
\ifx \showISBNxiii \undefined \def \showISBNxiii  #1{\unskip}     \fi
\ifx \showISSN     \undefined \def \showISSN      #1{\unskip}     \fi
\ifx \showLCCN     \undefined \def \showLCCN      #1{\unskip}     \fi
\ifx \shownote     \undefined \def \shownote      #1{#1}          \fi
\ifx \showarticletitle \undefined \def \showarticletitle #1{#1}   \fi
\ifx \showURL      \undefined \def \showURL       {\relax}        \fi
\providecommand\bibfield[2]{#2}
\providecommand\bibinfo[2]{#2}
\providecommand\natexlab[1]{#1}
\providecommand\showeprint[2][]{arXiv:#2}

\bibitem[\protect\citeauthoryear{Agrawal et~al\mbox{.}}{Agrawal
  et~al\mbox{.}}{2006}]%
        {trio}
\bibfield{author}{\bibinfo{person}{Parag Agrawal} {et~al\mbox{.}}}
  \bibinfo{year}{2006}\natexlab{}.
\newblock \showarticletitle{Trio: A system for data, uncertainty, and lineage}.
\newblock \bibinfo{journal}{\emph{PVLDB}} (\bibinfo{year}{2006}).
\newblock


\bibitem[\protect\citeauthoryear{Alexe, Chiticariu, and Tan}{Alexe
  et~al\mbox{.}}{2006}]%
        {debug}
\bibfield{author}{\bibinfo{person}{Bogdan Alexe}, \bibinfo{person}{Laura
  Chiticariu}, {and} \bibinfo{person}{Wang-Chiew Tan}.}
  \bibinfo{year}{2006}\natexlab{}.
\newblock \showarticletitle{Spider: a schema mapping debugger}. In
  \bibinfo{booktitle}{\emph{VLDB}}.
\newblock


\bibitem[\protect\citeauthoryear{Arab et~al\mbox{.}}{Arab
  et~al\mbox{.}}{2018}]%
        {arab2018gprom}
\bibfield{author}{\bibinfo{person}{Bahareh~Sadat Arab} {et~al\mbox{.}}}
  \bibinfo{year}{2018}\natexlab{}.
\newblock \showarticletitle{GProM-a swiss army knife for your provenance
  needs}.
\newblock \bibinfo{journal}{\emph{Qtr.\ Bul.\ Comp. Soc.\ IEEE Tech.\ Comm.\
  Data Engg.}} \bibinfo{volume}{41}, \bibinfo{number}{1}
  (\bibinfo{year}{2018}).
\newblock


\bibitem[\protect\citeauthoryear{Arenas, Conca, and P\'{e}rez}{Arenas
  et~al\mbox{.}}{2012}]%
        {arenas2012}
\bibfield{author}{\bibinfo{person}{Marcelo Arenas},
  \bibinfo{person}{Sebasti\'{a}n Conca}, {and} \bibinfo{person}{Jorge
  P\'{e}rez}.} \bibinfo{year}{2012}\natexlab{}.
\newblock \showarticletitle{Counting beyond a Yottabyte, or How SPARQL 1.1
  Property Paths Will Prevent Adoption of the Standard}. In
  \bibinfo{booktitle}{\emph{WWW}}.
\newblock


\bibitem[\protect\citeauthoryear{Avgoustaki et~al\mbox{.}}{Avgoustaki
  et~al\mbox{.}}{2016}]%
        {sparqlUpdateProv}
\bibfield{author}{\bibinfo{person}{Argyro Avgoustaki} {et~al\mbox{.}}}
  \bibinfo{year}{2016}\natexlab{}.
\newblock \showarticletitle{Provenance Management for Evolving RDF Datasets}.
  In \bibinfo{booktitle}{\emph{ESWC}}.
\newblock


\bibitem[\protect\citeauthoryear{Babu}{Babu}{2009}]%
        {standing}
\bibfield{author}{\bibinfo{person}{Shivnath Babu}.}
  \bibinfo{year}{2009}\natexlab{}.
\newblock \bibinfo{booktitle}{\emph{Continuous Query}}.
\newblock \bibinfo{publisher}{Springer US}, \bibinfo{pages}{492--493}.
\newblock


\bibitem[\protect\citeauthoryear{Bidoit, Herschel, and Tzompanaki}{Bidoit
  et~al\mbox{.}}{2014}]%
        {whynotprovenance}
\bibfield{author}{\bibinfo{person}{Nicole Bidoit}, \bibinfo{person}{Melanie
  Herschel}, {and} \bibinfo{person}{Katerina Tzompanaki}.}
  \bibinfo{year}{2014}\natexlab{}.
\newblock \showarticletitle{Query-Based Why-Not Provenance with NedExplain}. In
  \bibinfo{booktitle}{\emph{EDBT}}.
\newblock


\bibitem[\protect\citeauthoryear{Bizer et~al\mbox{.}}{Bizer
  et~al\mbox{.}}{2009}]%
        {dbpedia}
\bibfield{author}{\bibinfo{person}{Christian Bizer} {et~al\mbox{.}}}
  \bibinfo{year}{2009}\natexlab{}.
\newblock \showarticletitle{DBpedia - {A} crystallization point for the Web of
  Data}.
\newblock \bibinfo{journal}{\emph{J. Web Semant.}} \bibinfo{volume}{7},
  \bibinfo{number}{3} (\bibinfo{year}{2009}), \bibinfo{pages}{154--165}.
\newblock


\bibitem[\protect\citeauthoryear{Bonifati, Martens, and Timm}{Bonifati
  et~al\mbox{.}}{2019}]%
        {wikimaze}
\bibfield{author}{\bibinfo{person}{Angela Bonifati}, \bibinfo{person}{Wim
  Martens}, {and} \bibinfo{person}{Thomas Timm}.}
  \bibinfo{year}{2019}\natexlab{}.
\newblock \showarticletitle{Navigating the Maze of Wikidata Query Logs}. In
  \bibinfo{booktitle}{\emph{WWW}}.
\newblock


\bibitem[\protect\citeauthoryear{Buneman, Khanna, and Wang-Chiew}{Buneman
  et~al\mbox{.}}{2001}]%
        {whyandwhere}
\bibfield{author}{\bibinfo{person}{Peter Buneman}, \bibinfo{person}{Sanjeev
  Khanna}, {and} \bibinfo{person}{Tan Wang-Chiew}.}
  \bibinfo{year}{2001}\natexlab{}.
\newblock \showarticletitle{Why and where: A characterization of data
  provenance}.
\newblock In \bibinfo{booktitle}{\emph{ICDT}}.
\newblock


\bibitem[\protect\citeauthoryear{Cheney, Chiticariu, and Tan}{Cheney
  et~al\mbox{.}}{2009}]%
        {Cheney2009ProvenanceID}
\bibfield{author}{\bibinfo{person}{James Cheney}, \bibinfo{person}{Laura
  Chiticariu}, {and} \bibinfo{person}{Wang~Chiew Tan}.}
  \bibinfo{year}{2009}\natexlab{}.
\newblock \showarticletitle{Provenance in Databases: Why, How, and Where}.
\newblock \bibinfo{journal}{\emph{Foundations and Trends in Databases}}
  \bibinfo{volume}{1} (\bibinfo{year}{2009}), \bibinfo{pages}{379--474}.
\newblock


\bibitem[\protect\citeauthoryear{Choudhury et~al\mbox{.}}{Choudhury
  et~al\mbox{.}}{2015}]%
        {sjTree}
\bibfield{author}{\bibinfo{person}{Sutanay Choudhury} {et~al\mbox{.}}}
  \bibinfo{year}{2015}\natexlab{}.
\newblock \showarticletitle{A Selectivity based approach to Continuous Pattern
  Detection in Streaming Graphs}. In \bibinfo{booktitle}{\emph{EDBT}}.
\newblock


\bibitem[\protect\citeauthoryear{Cui and Widom}{Cui and Widom}{2000}]%
        {lineage}
\bibfield{author}{\bibinfo{person}{Yingwei Cui} {and} \bibinfo{person}{Jennifer
  Widom}.} \bibinfo{year}{2000}\natexlab{}.
\newblock \showarticletitle{Practical lineage tracing in data warehouses}. In
  \bibinfo{booktitle}{\emph{ICDE}}.
\newblock


\bibitem[\protect\citeauthoryear{Cyganiak}{Cyganiak}{2005}]%
        {cyganiak2005}
\bibfield{author}{\bibinfo{person}{Richard Cyganiak}.}
  \bibinfo{year}{2005}\natexlab{}.
\newblock \bibinfo{booktitle}{\emph{A relational algebra for SPARQL}}.
\newblock \bibinfo{type}{{T}echnical {R}eport} HPL-2005-170.
  \bibinfo{institution}{HP Laboratories}.
\newblock


\bibitem[\protect\citeauthoryear{Deutch and Gilad}{Deutch and Gilad}{2019}]%
        {reverse}
\bibfield{author}{\bibinfo{person}{Daniel Deutch} {and} \bibinfo{person}{Amir
  Gilad}.} \bibinfo{year}{2019}\natexlab{}.
\newblock \showarticletitle{Reverse-Engineering Conjunctive Queries from
  Provenance Examples}. In \bibinfo{booktitle}{\emph{EDBT}}.
\newblock


\bibitem[\protect\citeauthoryear{Fan et~al\mbox{.}}{Fan et~al\mbox{.}}{2011}]%
        {isoIncMat}
\bibfield{author}{\bibinfo{person}{Wenfei Fan} {et~al\mbox{.}}}
  \bibinfo{year}{2011}\natexlab{}.
\newblock \showarticletitle{Incremental Graph Pattern Matching}. In
  \bibinfo{booktitle}{\emph{SIGMOD}}.
\newblock


\bibitem[\protect\citeauthoryear{Gaur, Bedathur, and Bhattacharya}{Gaur
  et~al\mbox{.}}{2017}]%
        {gaur2017}
\bibfield{author}{\bibinfo{person}{Garima Gaur}, \bibinfo{person}{Srikanta~J.
  Bedathur}, {and} \bibinfo{person}{Arnab Bhattacharya}.}
  \bibinfo{year}{2017}\natexlab{}.
\newblock \showarticletitle{Tracking the Impact of Fact Deletions on Knowledge
  Graph Queries Using Provenance Polynomials}. In
  \bibinfo{booktitle}{\emph{CIKM}}.
\newblock


\bibitem[\protect\citeauthoryear{Glavic and Alonso}{Glavic and Alonso}{2009}]%
        {perm}
\bibfield{author}{\bibinfo{person}{Boris Glavic} {and} \bibinfo{person}{Gustavo
  Alonso}.} \bibinfo{year}{2009}\natexlab{}.
\newblock \showarticletitle{The Perm Provenance Management System in Action}.
  In \bibinfo{booktitle}{\emph{SIGMOD}}.
\newblock


\bibitem[\protect\citeauthoryear{Green, Karvounarakis, and Tannen}{Green
  et~al\mbox{.}}{2007}]%
        {provenancesemirings}
\bibfield{author}{\bibinfo{person}{Todd~J. Green}, \bibinfo{person}{Grigoris
  Karvounarakis}, {and} \bibinfo{person}{Val Tannen}.}
  \bibinfo{year}{2007}\natexlab{}.
\newblock \showarticletitle{Provenance Semirings}. In
  \bibinfo{booktitle}{\emph{PODS}}.
\newblock


\bibitem[\protect\citeauthoryear{Gubichev and Neumann}{Gubichev and
  Neumann}{2014}]%
        {csPair}
\bibfield{author}{\bibinfo{person}{Andrey Gubichev} {and}
  \bibinfo{person}{Thomas Neumann}.} \bibinfo{year}{2014}\natexlab{}.
\newblock \showarticletitle{Exploiting the query structure for efficient join
  ordering in sparql queries}. In \bibinfo{booktitle}{\emph{IN EDBT}}.
  \bibinfo{pages}{439--450}.
\newblock


\bibitem[\protect\citeauthoryear{Halpin and Cheney}{Halpin and Cheney}{2014}]%
        {dynamicProv}
\bibfield{author}{\bibinfo{person}{Harry Halpin} {and} \bibinfo{person}{James
  Cheney}.} \bibinfo{year}{2014}\natexlab{}.
\newblock \showarticletitle{Dynamic Provenance for SPARQL Updates}. In
  \bibinfo{booktitle}{\emph{ISWC}}.
\newblock


\bibitem[\protect\citeauthoryear{Hendler et~al\mbox{.}}{Hendler
  et~al\mbox{.}}{2012}]%
        {hendler2012us}
\bibfield{author}{\bibinfo{person}{James Hendler} {et~al\mbox{.}}}
  \bibinfo{year}{2012}\natexlab{}.
\newblock \showarticletitle{US government linked open data: semantic data gov}.
\newblock \bibinfo{journal}{\emph{IEEE Intelligent Systems}}
  \bibinfo{volume}{27}, \bibinfo{number}{3} (\bibinfo{year}{2012}),
  \bibinfo{pages}{25--31}.
\newblock


\bibitem[\protect\citeauthoryear{Hoffart et~al\mbox{.}}{Hoffart
  et~al\mbox{.}}{2011}]%
        {hoffart2011yago2}
\bibfield{author}{\bibinfo{person}{Johannes Hoffart} {et~al\mbox{.}}}
  \bibinfo{year}{2011}\natexlab{}.
\newblock \showarticletitle{YAGO2: exploring and querying world knowledge in
  time, space, context, and many languages}. In
  \bibinfo{booktitle}{\emph{WWW}}.
\newblock


\bibitem[\protect\citeauthoryear{Hoffart et~al\mbox{.}}{Hoffart
  et~al\mbox{.}}{2013}]%
        {yago2}
\bibfield{author}{\bibinfo{person}{Johannes Hoffart} {et~al\mbox{.}}}
  \bibinfo{year}{2013}\natexlab{}.
\newblock \showarticletitle{{YAGO2}: A spatially and temporally enhanced
  knowledge base from Wikipedia}.
\newblock \bibinfo{journal}{\emph{Artificial Intelligence}}
  (\bibinfo{year}{2013}).
\newblock


\bibitem[\protect\citeauthoryear{Ives et~al\mbox{.}}{Ives
  et~al\mbox{.}}{2008}]%
        {orchestra}
\bibfield{author}{\bibinfo{person}{Zachary~G Ives} {et~al\mbox{.}}}
  \bibinfo{year}{2008}\natexlab{}.
\newblock \showarticletitle{The ORCHESTRA collaborative data sharing system}.
\newblock \bibinfo{journal}{\emph{ACM SIGMOD Record}} \bibinfo{volume}{37},
  \bibinfo{number}{3} (\bibinfo{year}{2008}).
\newblock


\bibitem[\protect\citeauthoryear{Kankanamge et~al\mbox{.}}{Kankanamge
  et~al\mbox{.}}{2017}]%
        {graphflow}
\bibfield{author}{\bibinfo{person}{Chathura Kankanamge} {et~al\mbox{.}}}
  \bibinfo{year}{2017}\natexlab{}.
\newblock \showarticletitle{Graphflow: An Active Graph Database}. In
  \bibinfo{booktitle}{\emph{SIGMOD}}.
\newblock


\bibitem[\protect\citeauthoryear{Kim et~al\mbox{.}}{Kim et~al\mbox{.}}{2018}]%
        {turboflux}
\bibfield{author}{\bibinfo{person}{Kyoungmin Kim} {et~al\mbox{.}}}
  \bibinfo{year}{2018}\natexlab{}.
\newblock \showarticletitle{TurboFlux: A Fast Continuous Subgraph Matching
  System for Streaming Graph Data}. In \bibinfo{booktitle}{\emph{SIGMOD}}.
\newblock


\bibitem[\protect\citeauthoryear{Meimaris et~al\mbox{.}}{Meimaris
  et~al\mbox{.}}{2017}]%
        {meimaris2017extended}
\bibfield{author}{\bibinfo{person}{Marios Meimaris} {et~al\mbox{.}}}
  \bibinfo{year}{2017}\natexlab{}.
\newblock \showarticletitle{Extended characteristic sets: graph indexing for
  sparql query optimization}. In \bibinfo{booktitle}{\emph{ICDE}}.
\newblock


\bibitem[\protect\citeauthoryear{Missier, Belhajjame, and Cheney}{Missier
  et~al\mbox{.}}{2013}]%
        {w3prov}
\bibfield{author}{\bibinfo{person}{Paolo Missier}, \bibinfo{person}{Khalid
  Belhajjame}, {and} \bibinfo{person}{James Cheney}.}
  \bibinfo{year}{2013}\natexlab{}.
\newblock \showarticletitle{The W3C PROV Family of Specifications for Modelling
  Provenance Metadata}. In \bibinfo{booktitle}{\emph{ICDT}}.
\newblock


\bibitem[\protect\citeauthoryear{Mistry et~al\mbox{.}}{Mistry
  et~al\mbox{.}}{2001}]%
        {andortree}
\bibfield{author}{\bibinfo{person}{Hoshi Mistry} {et~al\mbox{.}}}
  \bibinfo{year}{2001}\natexlab{}.
\newblock \showarticletitle{Materialized View Selection and Maintenance Using
  Multi-query Optimization}. In \bibinfo{booktitle}{\emph{SIGMOD}}. 12.
\newblock


\bibitem[\protect\citeauthoryear{Mitchell et~al\mbox{.}}{Mitchell
  et~al\mbox{.}}{2015}]%
        {nell}
\bibfield{author}{\bibinfo{person}{Tom~M. Mitchell} {et~al\mbox{.}}}
  \bibinfo{year}{2015}\natexlab{}.
\newblock \showarticletitle{Never-Ending Learning}. In
  \bibinfo{booktitle}{\emph{AAAI}}.
\newblock


\bibitem[\protect\citeauthoryear{Neumann and Weikum}{Neumann and
  Weikum}{2010}]%
        {neumann2010rdf}
\bibfield{author}{\bibinfo{person}{Thomas Neumann} {and}
  \bibinfo{person}{Gerhard Weikum}.} \bibinfo{year}{2010}\natexlab{}.
\newblock \showarticletitle{The RDF-3X engine for scalable management of RDF
  data}.
\newblock \bibinfo{journal}{\emph{The VLDB Journal}} (\bibinfo{year}{2010}).
\newblock


\bibitem[\protect\citeauthoryear{Roth and Tan}{Roth and Tan}{2013}]%
        {whenprovenance}
\bibfield{author}{\bibinfo{person}{Mary Roth} {and} \bibinfo{person}{Wang~Chiew
  Tan}.} \bibinfo{year}{2013}\natexlab{}.
\newblock \showarticletitle{Data Integration and Data Exchange: It's Really
  About Time}. In \bibinfo{booktitle}{\emph{CIDR}}.
\newblock


\bibitem[\protect\citeauthoryear{Sarma, Theobald, and Widom}{Sarma
  et~al\mbox{.}}{2008}]%
        {trio-prob}
\bibfield{author}{\bibinfo{person}{A.~D. Sarma}, \bibinfo{person}{M. Theobald},
  {and} \bibinfo{person}{J. Widom}.} \bibinfo{year}{2008}\natexlab{}.
\newblock \showarticletitle{Exploiting Lineage for Confidence Computation in
  Uncertain and Probabilistic Databases}. In \bibinfo{booktitle}{\emph{ICDE}}.
\newblock


\bibitem[\protect\citeauthoryear{Senellart}{Senellart}{2018}]%
        {senellart2018provenance}
\bibfield{author}{\bibinfo{person}{Pierre Senellart}.}
  \bibinfo{year}{2018}\natexlab{}.
\newblock \showarticletitle{Provenance and probabilities in relational
  databases}.
\newblock \bibinfo{journal}{\emph{ACM SIGMOD Record}} \bibinfo{volume}{46},
  \bibinfo{number}{4} (\bibinfo{year}{2018}), \bibinfo{pages}{5--15}.
\newblock


\bibitem[\protect\citeauthoryear{Senellart et~al\mbox{.}}{Senellart
  et~al\mbox{.}}{2018}]%
        {senellart2018provsql}
\bibfield{author}{\bibinfo{person}{Pierre Senellart} {et~al\mbox{.}}}
  \bibinfo{year}{2018}\natexlab{}.
\newblock \showarticletitle{ProvSQL: provenance and probability management in
  postgreSQL}.
\newblock \bibinfo{journal}{\emph{PVLDB}} \bibinfo{volume}{11},
  \bibinfo{number}{12} (\bibinfo{year}{2018}), \bibinfo{pages}{2034--2037}.
\newblock


\bibitem[\protect\citeauthoryear{Taylor and Ives}{Taylor and Ives}{2006}]%
        {trust}
\bibfield{author}{\bibinfo{person}{Nicholas~E. Taylor} {and}
  \bibinfo{person}{Zachary~G. Ives}.} \bibinfo{year}{2006}\natexlab{}.
\newblock \showarticletitle{Reconciling While Tolerating Disagreement in
  Collaborative Data Sharing}. In \bibinfo{booktitle}{\emph{SIGMOD}}. 12.
\newblock


\bibitem[\protect\citeauthoryear{Upadhyay et~al\mbox{.}}{Upadhyay
  et~al\mbox{.}}{2020}]%
        {teknobase}
\bibfield{author}{\bibinfo{person}{Prajna Upadhyay} {et~al\mbox{.}}}
  \bibinfo{year}{2020}\natexlab{}.
\newblock \showarticletitle{Aspect-Based Academic Search Using Domain-Specific
  {KB}}. In \bibinfo{booktitle}{\emph{ECIR}}.
\newblock


\bibitem[\protect\citeauthoryear{Wang et~al\mbox{.}}{Wang
  et~al\mbox{.}}{2017}]%
        {wang2017knowledge}
\bibfield{author}{\bibinfo{person}{Quan Wang} {et~al\mbox{.}}}
  \bibinfo{year}{2017}\natexlab{}.
\newblock \showarticletitle{Knowledge graph embedding: A survey of approaches
  and applications}.
\newblock \bibinfo{journal}{\emph{TKDE}} \bibinfo{volume}{29},
  \bibinfo{number}{12} (\bibinfo{year}{2017}), \bibinfo{pages}{2724--2743}.
\newblock


\bibitem[\protect\citeauthoryear{White et~al\mbox{.}}{White
  et~al\mbox{.}}{2014}]%
        {pharmacovigilence14}
\bibfield{author}{\bibinfo{person}{R.~B. White} {et~al\mbox{.}}}
  \bibinfo{year}{2014}\natexlab{}.
\newblock \showarticletitle{Toward Enhanced Pharmacovigilance Using
  Patient-Generated Data on the Internet}.
\newblock \bibinfo{journal}{\emph{Clinical Pharmacology and Therapeutics}}
  \bibinfo{volume}{96} (\bibinfo{year}{2014}), \bibinfo{pages}{239 -- 246}.
\newblock


\bibitem[\protect\citeauthoryear{Wishart et~al\mbox{.}}{Wishart
  et~al\mbox{.}}{2008}]%
        {drugbank}
\bibfield{author}{\bibinfo{person}{David~S. Wishart} {et~al\mbox{.}}}
  \bibinfo{year}{2008}\natexlab{}.
\newblock \showarticletitle{DrugBank: a knowledgebase for drugs, drug actions
  and drug targets}. In \bibinfo{booktitle}{\emph{Nucleic Acids Research}},
  Vol.~\bibinfo{volume}{36}.
\newblock


\bibitem[\protect\citeauthoryear{Wylot et~al\mbox{.}}{Wylot
  et~al\mbox{.}}{2011}]%
        {diplo}
\bibfield{author}{\bibinfo{person}{Marcin Wylot} {et~al\mbox{.}}}
  \bibinfo{year}{2011}\natexlab{}.
\newblock \showarticletitle{dipLODocus[RDF] - Short and Long-Tail {RDF}
  Analytics for Massive Webs of Data}. In \bibinfo{booktitle}{\emph{ISWC}}.
\newblock


\bibitem[\protect\citeauthoryear{{Wylot} et~al\mbox{.}}{{Wylot}
  et~al\mbox{.}}{2017}]%
        {tripleProvExtension}
\bibfield{author}{\bibinfo{person}{M. {Wylot}} {et~al\mbox{.}}}
  \bibinfo{year}{2017}\natexlab{}.
\newblock \showarticletitle{Storing, Tracking, and Querying Provenance in
  Linked Data}.
\newblock \bibinfo{journal}{\emph{IEEE TKDE}} \bibinfo{volume}{29},
  \bibinfo{number}{8} (\bibinfo{year}{2017}).
\newblock


\bibitem[\protect\citeauthoryear{Wylot, Cudre-Mauroux, and Groth}{Wylot
  et~al\mbox{.}}{2014}]%
        {tripleProv}
\bibfield{author}{\bibinfo{person}{Marcin Wylot}, \bibinfo{person}{Philippe
  Cudre-Mauroux}, {and} \bibinfo{person}{Paul Groth}.}
  \bibinfo{year}{2014}\natexlab{}.
\newblock \showarticletitle{TripleProv: Efficient Processing of Lineage Queries
  in a Native RDF Store}. In \bibinfo{booktitle}{\emph{WWW}}.
  \bibinfo{pages}{455--466}.
\newblock


\bibitem[\protect\citeauthoryear{Yang et~al\mbox{.}}{Yang
  et~al\mbox{.}}{2017}]%
        {techsupp1}
\bibfield{author}{\bibinfo{person}{Shuo Yang} {et~al\mbox{.}}}
  \bibinfo{year}{2017}\natexlab{}.
\newblock \showarticletitle{Efficiently Answering Technical Questions - A
  Knowledge Graph Approach}. In \bibinfo{booktitle}{\emph{AAAI}}.
\newblock


\end{thebibliography}

\end{document}